\title{The supersingular isogeny problem in genus~2 and beyond}
\author{Craig Costello\inst{1} \and Benjamin Smith\inst{2}}
\institute{ 
    Microsoft Research, USA \\
    \email{craigco@microsoft.com} \\
    \and
    Inria 
    \emph{and}
    École polytechnique,
    Institut Polytechnique de Paris,
    Palaiseau, France \\
    \email{smith@lix.polytechnique.fr}  
}
\date{\today}
\newcommand{\FF}{\ensuremath{\mathbb{F}}}
\newcommand{\FFbar}{\ensuremath{\overline{\mathbb{F}}}}
\newcommand{\ZZ}{\ensuremath{\mathbb{Z}}}
\newcommand{\EC}{\ensuremath{\mathcal{E}}}
\newcommand{\XC}{\ensuremath{\mathcal{C}}}
\newcommand{\AV}{\ensuremath{\mathcal{A}}}
\newcommand{\BV}{\ensuremath{\mathcal{B}}}
\newcommand{\Jac}[1]{\ensuremath{\mathcal{J}_{#1}}}
\newcommand{\Aut}{\ensuremath{\mathrm{Aut}}}
\newcommand{\End}{\ensuremath{\mathrm{End}}}
\newcommand{\Pic}{\ensuremath{\mathrm{Pic}}}
\newcommand{\dualof}[1]{\ensuremath{{#1}^\dagger}}
\newcommand{\subgrp}[1]{\ensuremath{\langle{#1}\rangle}}
\newcommand{\qbinom}[3][\ell]{\ensuremath{{\genfrac{[}{]}{0pt}{}{{#2}}{{#3}}}_{#1}}}
\newcommand{\softO}{\ensuremath{\widetilde{O}}}
\newcommand{\SSset}[1]{\ensuremath{S_{#1}(p)}}
\newcommand{\SSgraph}[2]{\ensuremath{\Gamma_{#1}({#2};p)}}
\spnewtheorem{hypothesis}{Hypothesis}{\bfseries}{\itshape}
\begin{document}
\maketitle

\begin{abstract}
    Let $A/\overline{\mathbb{F}}_p$ and $A'/\overline{\mathbb{F}}_p$
    be superspecial principally polarized abelian varieties
    of dimension $g>1$.
    For any prime $\ell \ne p$, we give an algorithm that finds
    a path $\phi \colon A \rightarrow A'$ 
    in the $(\ell, \dots , \ell)$-isogeny graph 
    in $\widetilde{O}(p^{g-1})$ group operations on a classical computer, and
    $\widetilde{O}(\sqrt{p^{g-1}})$ calls to the Grover oracle on a quantum
    computer. The idea is to find paths from $A$ and $A'$ 
    to nodes that correspond to products of lower dimensional
    abelian varieties, and to recurse down in dimension until an
    elliptic path-finding algorithm (such as Delfs--Galbraith) 
    can be invoked to connect the paths in dimension $g=1$. 
    In the general case where $A$ and $A'$ are any two nodes in the graph,
    this algorithm presents an asymptotic improvement over all of the
    algorithms in the current literature. In the special case where $A$
    and $A'$ are a known and relatively small number of steps away from
    each other (as is the case in higher dimensional analogues of SIDH),
    it gives an asymptotic improvement over the quantum claw finding
    algorithms and an asymptotic improvement over the classical van
    Oorschot--Wiener algorithm. 
%    Our results provide evidence that performance trade-offs for
%    high-dimensional isogeny-based cryptography are unlikely to be
%    favourable when venturing beyond dimension $g=2$. 
\end{abstract}

\section{%%%%%%%%%%%%%%%%%%%%%%%%%%%%%%%%%%%%%%%%%%%%%%%%%%%%%%%%%%%%%%%%%%%%%%%
    Introduction
}%%%%%%%%%%%%%%%%%%%%%%%%%%%%%%%%%%%%%%%%%%%%%%%%%%%%%%%%%%%%%%%%%%%%%%%%%%%%%%%

Isogenies of supersingular elliptic curves
are now well-established in %number-theoretic
cryptography,
from the Charles--Goren--Lauter hash function~\cite{CGL}
to Jao and De Feo's SIDH key exchange~\cite{SIDH}
and beyond~\cite{SIKE,GPS17,seasign,vdf}.
While the security of isogeny-based cryptosystems depend 
on the difficulty of a range of computational problems,
the fundamental one is the \emph{isogeny problem}:
given supersingular elliptic curves \(\EC_1\) and \(\EC_2\) over
\(\FF_{p^2}\),
find a walk in the \(\ell\)-isogeny graph
connecting them.

One intriguing aspect of isogeny-based cryptography
is the transfer of elliptic-curve techniques
from classic discrete-log-based cryptography
into the post-quantum arena.
In this spirit, it is natural to consider cryptosystems
based on isogeny graphs of higher-dimensional abelian varieties,
mirroring the transition from elliptic (ECC)
to hyperelliptic-curve cryptography (HECC).
Compared with elliptic supersingular isogeny graphs,
the higher-dimensional graphs have more vertices and higher degrees
for a given \(p\), which allows some interesting tradeoffs
(for example: in dimension \(g = 2\), we get the same number of vertices
with a \(p\) of one-third the bitlength).

For \(g = 2\),
Takashima~\cite{Takashima}
and Castryck, Decru, and Smith~\cite{Castryck--Decru--Smith}
have defined CGL-style hash functions,
while Costello~\cite{costellokummer} and Flynn and Ti~\cite{flynnti}
have already proposed SIDH-like key exchanges.
Generalizations to dimensions \(g > 2\),
using isogeny algorithms such as those in~\cite{AVIsogenies},
are easy to anticipate;
for example,
a family of hash functions on isogeny graphs of superspecial abelian
varieties with real multiplication was hinted at in~\cite{CGL-2}.

So far, when estimating security levels, these generalizations assume
that the higher-dimensional supersingular isogeny problem 
is basically as hard as the elliptic supersingular isogeny problem
in graphs of the same size.
%The one new attack in \(g=2\) is that we can find trivial cycles
%of length \(4\) in the graph, 
%so simply avoiding backtracking is not enough to avoid hash collisions.
%Various workarounds to avoid this are proposed in~\cite{flynnti}
%and~\cite{Castryck--Decru--Smith}; ultimately,
%it represents more of a minor inconvenience than a fatal flaw.
%
In this article,
we show that this assumption is false.
The general supersingular isogeny problem can be partially reduced to
a series of lower-dimensional isogeny problems,
and thus recursively to a series of elliptic isogeny problems.
%Our main result is the following:

\begin{theorem}\label{thm:classic}
    There exists a classical algorithm which,
    given a prime \(\ell\) and superspecial abelian varieties \(\AV_1\) and \(\AV_2\)
    of dimension \(g\) over \(\FFbar_p\) with \(p \not= \ell\),
    succeeds with probability \(\ge 1/2^{g-1}\) in computing
    a composition of \((\ell,\ldots,\ell)\)-isogenies
    from \(\AV_1\) to \(\AV_2\),
    running in expected time
    \(\softO((p^{g-1}/P))\) on \(P\) processors
    as \(p \to \infty\) (with \(\ell\) fixed).
\end{theorem}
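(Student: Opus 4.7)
The plan is to proceed by induction on $g$, with the base case $g = 1$ handled by the Delfs--Galbraith algorithm, which finds an $\ell$-isogeny path between two supersingular elliptic curves over $\FFbar_p$ in expected time $\softO(\sqrt{p})$. For the inductive step with $g \geq 2$, the idea is to carry out pseudo-random $(\ell,\dots,\ell)$-walks starting from $\AV_1$ and from $\AV_2$ in the superspecial graph, continuing each walk until it first arrives at a vertex representing a decomposable principally polarized abelian variety, that is, one isomorphic to a product $\BV \times \EC$ with $\dim \EC = 1$ and $\dim \BV = g-1$. The problem then splits into a $(g-1)$-dimensional isogeny search between the $\BV$-factors of the two endpoints (handled by the induction hypothesis) and a $1$-dimensional isogeny search between the $\EC$-factors (handled by Delfs--Galbraith). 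Concatenating the three pieces --- the walk out of $\AV_1$, the product of the recursively computed sub-isogenies, and the reverse of the walk out of $\AV_2$ --- then gives the desired $(\ell,\dots,\ell)$-isogeny from $\AV_1$ to $\AV_2$.

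The complexity analysis rests on a density estimate for the target vertices. The total number of superspecial PPAVs of dimension $g$ over $\FFbar_p$ grows like $p^{g(g+1)/2}$ (up to a constant depending on $g$), while the decomposable ones of the form $\BV \times \EC$ are parametrized by a product of lower-dimensional superspecial moduli and hence have count growing like $p^{g(g-1)/2 + 1}$; their relative density is therefore $\Theta(p^{-(g-1)})$. Under the standard assumption that the $(\ell,\dots,\ell)$-graph is expander-like enough for walks short compared with the diameter to equidistribute well, a random walk of length $\softO(p^{g-1})$ reaches a decomposable vertex with constant probability, and parallelizing across $P$ independent walkers per side reduces the expected wall-clock cost to $\softO(p^{g-1}/P)$. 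The recursive call on the $(g-1)$-dimensional sub-problem costs at most $\softO(p^{g-2}/P)$ by the induction hypothesis, so the top-level walks dominate and the overall cost remains $\softO(p^{g-1}/P)$.

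The success probability $1/2^{g-1}$ arises level by level. Once both walks have reached decomposable vertices, the algorithm must pair the factor split off on the $\AV_1$-side with the factor split off on the $\AV_2$-side. There is essentially a binary choice in this matching, and only one of the two options is compatible with the product polarizations and therefore yields a genuine $(\ell,\dots,\ell)$-isogeny after concatenation; the other option produces an isogeny of the wrong type, or none at all. Multiplying one factor of $1/2$ for each of the $g-1$ recursion levels gives the stated worst-case bound.

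The hard part, I expect, is making the random-walk analysis rigorous. The density count for decomposable PPAVs follows from higher-dimensional Eichler-type mass formulas, but proving equidistribution of $(\ell,\dots,\ell)$-walks after $\softO(p^{g-1})$ steps --- long enough to hit a sparse target but much shorter than the full graph diameter --- relies on expansion properties of the higher-dimensional superspecial graphs that are not yet established at the same level of generality as in dimension $1$. A secondary difficulty is the polarization bookkeeping when reassembling sub-isogenies of $\BV$ and $\EC$ into an isogeny of the polarized product, which is what forces the $1/2$ penalty per dimension rather than a cleaner deterministic reduction.
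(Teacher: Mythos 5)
Your algorithmic outline matches the paper's Algorithm~\ref{alg:general} closely: walk from each endpoint into the image of $\SSset{g-1}\times\SSset{1}$, recurse on the $(g-1)$-dimensional factors, solve the elliptic problem on the $1$-dimensional factors, and glue. The density estimate $\Theta(p^{-(g-1)})$ for decomposable vertices, the $\softO(p^{g-1}/P)$ cost of the top-level search dominating the recursion, and the caveat about unproven expansion (the paper's Hypothesis~\ref{hypothesis:1}) are all on target.

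However, your explanation of the $1/2^{g-1}$ success probability is wrong, and this is the one nontrivial combinatorial ingredient of the proof. You attribute the factor $1/2$ per level to an alleged binary ambiguity in ``pairing the factor split off on the $\AV_1$-side with the factor split off on the $\AV_2$-side'', only one choice being compatible with the polarizations. But for $g>2$ the two factors have different dimensions ($g-1$ and $1$), so there is no such choice at all, and even for $g=2$ the algorithm simply fixes a labelling of the two elliptic factors; polarization compatibility is automatic for products of $(\ell,\ldots,\ell)$-isogenies equipped with product polarizations. The actual source of the $1/2$ is a \emph{parity} constraint coming from the gluing step (Lemma~\ref{lemma:combine}): to extend the shorter of the two sub-paths $\beta\colon\BV\to\BV'$ and $\eta\colon\EC\to\EC'$ to the length of the longer one while remaining inside $\SSgraph{g}{\ell}$, one pads by repeatedly appending an isogeny followed by its Rosati dual, which changes the length only in increments of two. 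The product path $\beta\times\eta$ therefore exists only when $\operatorname{length}(\beta)\equiv\operatorname{length}(\eta)\pmod 2$, an event of heuristic probability $1/2$, and one such constraint is incurred at each of the $g-1$ recursion levels. (The paper also notes this bound is not tight: one can retry with fresh walks when the parities mismatch.) Your proposal would need to replace the factor-matching argument with this parity-padding lemma to be correct.
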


Given that these graphs have \(O(p^{g(g+1)/2})\) vertices,
the expected runtime for generic random-walk algorithms
is \(\softO(p^{g(g+1)/4}/P)\).
Our algorithm therefore represents a substantial speedup,
with nontrivial consequences for cryptographic parameter
selection.\footnote{%
    Our algorithms apply to the full superspecial graph;
    we do not claim any impact on cryptosystems that run in 
    small and special subgraphs, such as CSIDH~\cite{csidh}.
}
We also see an improvement in quantum algorithms:

\begin{theorem}\label{thm:quantum}
    There exists a quantum algorithm which,
    given a prime \(\ell\) and superspecial abelian varieties \(\AV_1\) and \(\AV_2\)
    of dimension \(g\) over \(\FFbar_p\) with \(p \not= \ell\),
    computes a composition of \((\ell,\ldots,\ell)\)-isogenies from \(\AV_1\) to \(\AV_2\) 
    running in expected time \(\softO(\sqrt{p^{g-1}})\)
    as \(p \to \infty\) (with \(\ell\) fixed).
\end{theorem}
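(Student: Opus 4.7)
The plan is to quantize the classical recursive algorithm of Theorem~\ref{thm:classic} by replacing its unstructured random search for decomposable neighbours with a Grover search. At dimension $g$, the classical algorithm walks the $(\ell,\ldots,\ell)$-isogeny graph from each $\AV_i$ until it hits a node that splits as a product of lower-dimensional principally polarized abelian varieties; the density of such nodes is $\Theta(1/p^{g-1})$, which accounts for the $\widetilde{O}(p^{g-1})$ bound. This is precisely the sort of unstructured search to which Grover applies.

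First I would fix a short walk length $k = O(\log_\ell p^{g-1})$ and define a classical predicate $f$ that, on input the index of such a walk from $\AV_i$, computes the endpoint using the same step primitives as in Theorem~\ref{thm:classic} and tests whether the resulting principally polarized abelian variety is decomposable (this test runs in $\mathrm{poly}(\log p)$ time, e.g.\ by inspecting the polarization or the Kummer model in the cases of interest). Since $f$ is classical and runs in polylog time, it can be made reversible and implemented as a quantum oracle $U_f$ with polylog gate cost. Grover search (wrapped in standard amplitude amplification for unknown marked-set size) then locates a marked walk from each $\AV_i$ in $\widetilde{O}(\sqrt{p^{g-1}})$ oracle calls, hence in expected time $\widetilde{O}(\sqrt{p^{g-1}})$.

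Having descended to a pair of isogeny problems on the factors in dimensions $\le g-1$, I would recurse, applying the same Grover wrapper at each level. The cost at dimension $g-1$ is $\widetilde{O}(\sqrt{p^{g-2}})$, and so on down the chain; at the base case $g=1$ we invoke a quantum supersingular isogeny algorithm such as Biasse--Jao--Song, with cost $\widetilde{O}(p^{1/4})$. All lower-dimensional costs are geometrically dominated by the top level, yielding the claimed overall expected time $\widetilde{O}(\sqrt{p^{g-1}})$.

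The main obstacle is showing that the decomposability test and walk evaluation can be implemented reversibly at polylog cost per Grover query, so that query complexity translates to time complexity without hidden factors; in particular one must argue that the marked fraction is indeed $\Omega(1/p^{g-1})$ inside the short-walk search space, so that quantum amplitude amplification attains the $\sqrt{p^{g-1}}$ bound rather than a worse one. A secondary subtlety is combining the $\ge 1/2^{g-1}$ per-level success probability inherited from Theorem~\ref{thm:classic} with the constant-probability guarantee of Grover: since $g$ is constant relative to $p$, a bounded number of independent repetitions boosts the overall success probability above $1/2$ and keeps the expected runtime within $\widetilde{O}(\sqrt{p^{g-1}})$.
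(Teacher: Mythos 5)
Your proposal follows essentially the same approach as the paper: Grover search (with the Boyer--Brassard--H{\o}yer--Tapp variant for unknown marked-set size) over a polylog-length walk space to reach $\SSset{g-1}\times\SSset{1}$, recursion in decreasing dimension, and an elliptic path-finding routine at $g=1$, with repetitions to boost the per-level parity/success probability. One small mismatch: invoking Biasse--Jao--Sankar at the base produces isogenies of arbitrary degree rather than a path in $\SSgraph{1}{\ell}$, so it does not literally deliver a composition of $(\ell,\ldots,\ell)$-isogenies as the theorem requires; the paper instead runs Grover (or classical Pollard-style walks) over $\SSgraph{1}{\ell}$ at cost $\softO(\sqrt{p})$, which is still absorbed into $\softO(\sqrt{p^{g-1}})$ for $g\ge 2$, and relegates the arbitrary-degree variant to Remark~\ref{rem:anyversusell}.
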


This reflects the general pattern seen in the passage from ECC to HECC:
the dimension grows, the base field shrinks----and the mathematical
structures become more complicated, which can ultimately reduce claimed security levels.
Just as index calculus attacks on
discrete logarithms become more powerful in higher genus, where useful
structures appear in Jacobians~\cite{Diem06,GTTD,Gaudry09,Smith09},
so interesting structures in higher-dimensional isogeny graphs
provide attacks that become more powerful as the dimension grows.
Here, the interesting structures are (relatively large) subgraphs corresponding to 
increasing numbers of elliptic factors in (polarized) abelian varieties.
These subgraphs are relatively large,
and so random-walking into them is relatively easy.
We can then glue together elliptic isogenies, found with an elliptic
path-finding algorithm, to form product isogenies between products
of elliptic curves, and thus to solve the original isogeny problem.
We will see that the path-finding problem in the superspecial graph gets
asymptotically easier as the dimension grows.

\paragraph{Notation and conventions.}
Throughout, \(p\) denotes a prime \(>3\),
and \(\ell\) a prime not equal to \(p\).
Typically, \(p\) is large, 
and \(\ell \ll \log(p)\) is small enough that 
computing \((\ell,\ldots,\ell)\)-isogenies
of \(g\)-dimensional principally polarized abelian varieties (PPAVs)
is polynomial in \(\log(p)\).
Similarly, 
we work with PPAVs in dimensions \(g \ll \log p\);
in our asymptotics and complexities, \(g\) and \(\ell\) are fixed.
We say a function \(f(X)\) is in \(\softO(g(X))\)
if \(f(X) = O(h(\log X)g(X))\) for some polynomial~\(h\).

\section{%%%%%%%%%%%%%%%%%%%%%%%%%%%%%%%%%%%%%%%%%%%%%%%%%%%%%%%%%%%%%%%%%%%%%%%
    The elliptic supersingular isogeny graph
}%%%%%%%%%%%%%%%%%%%%%%%%%%%%%%%%%%%%%%%%%%%%%%%%%%%%%%%%%%%%%%%%%%%%%%%%%%%%%%%

An elliptic curve \(\EC/\FFbar_p\)
is \emph{supersingular}
if \(\EC[p](\FFbar_p) = 0\).
We have a number of efficient algorithms
for testing supersingularity:
see Sutherland~\cite{Sutherland} for discussion.

Supersingularity is isomorphism-invariant,
and any supersingular \(\EC\) has \(j\)-invariant \(j(\EC)\) in \(\FF_{p^2}\);
and in fact the curve \(\EC\) 
can be defined over \(\FF_{p^2}\).
We let 
\[
    \SSset{1} 
    :=
    \left\{
        j(\EC) : \EC/\FF_{p^2} \text{ is supersingular}
    \right\}
    \subset
    \FF_{p^2}
\]
be the set of isomorphism classes of supersingular elliptic curves over \(\FFbar_p\).
It is well-known that
\begin{equation}
    \label{eq:size-of-S1}
    \#\SSset{1} 
    =
    \left\lfloor\frac{p}{12}\right\rfloor
    +
    \epsilon_p
\end{equation}
where \(\epsilon_p = 0\) if \(p \equiv 1 \pmod{12}\),
\(2\) if \(p \equiv -1 \pmod{12}\),
and \(1\) otherwise.
%% RESTORE IN LONG VERSION:
%In terms of \(\log p\),
%the set \(\SSset{1}\) is exponentially large;
%but it is also an exponentially small subset of \(\FF_{p^2}\)
%(the set of \emph{all} \(j\)-invariants),
%in the sense that a uniformly random element of \(\FF_{p^2}\)
%is in \(\SSset{1}\) with probability only \(\sim 12/p\).

Now fix a prime \(\ell \not= p\),
and consider the directed multigraph
\(\SSgraph{1}{\ell}\)
whose vertex set is \(\SSset{1}\),
and whose edges correspond to \(\ell\)-isogenies
between curves (again, up to isomorphism).
The graph 
\(\SSgraph{1}{\ell}\) is \((\ell+1)\)-regular:
there are (up to isomorphism)
\(\ell+1\) distinct \(\ell\)-isogenies
from a supersingular elliptic curve \(\EC/\FF_{p^2}\)
to other elliptic curves,
corresponding to the \(\ell+1\) order-\(\ell\) subgroups
of \(\EC[\ell](\FFbar_p) \cong (\ZZ/\ell\ZZ)^2\)
that form their kernels.
But since supersingularity is isogeny-invariant,
the codomain of each isogeny is again supersingular;
that is,
the \(\ell+1\) order-\(\ell\) subgroups of \(\EC[\ell]\)
are in bijection with
the edges out of \(j(\EC)\) in \(\SSgraph{1}{\ell}\).

\begin{definition}
    A \emph{walk} of length \(n\) in \(\SSgraph{1}{\ell}\)
    is a sequence of edges
    \(j_0 \to j_1 \to \cdots \to j_n\).
    A \emph{path} 
    in \(\SSgraph{1}{\ell}\)
    is an acyclic (and, in particular, non-backtracking) walk:
    that is, a walk 
    \(j_0 \to j_1 \to \cdots \to j_n\)
    such that \(j_i = j_{i'}\) if and only if \(i = i'\).
\end{definition}

Pizer~\cite{pizer}
proved that \(\SSgraph{1}{\ell}\) is Ramanujan:
in particular, 
\(\SSgraph{1}{\ell}\) is a connected expander graph,
% and \(\lambda(\SSgraph{1}{\ell}) = 2\sqrt{\ell}\).
%% FIXME: should that be \lambda(G)/d and not \lambda(G) ?
and its diameter %(the maximal distance between any two vertices)
is \(O(\log p)\).
We therefore expect the end-points of short random walks
from any given vertex \(j_0\) to
quickly yield a uniform distribution on \(\SSset{1}\).
Indeed, 
if \(j_0\) is fixed
and \(j_n\) is the end-point of an \(n\)-step
random walk from \(j_0\) in \(\SSgraph{1}{\ell}\),
then~\cite[Theorem 1]{GPS17}
shows that
\begin{equation}
    \label{eq:distribution-1}
    \left|
        \mathrm{Pr}[j_n = j] - \frac{1}{\#\SSset{1}}
    \right|
    \le
    \left(
        \frac{2\sqrt{\ell}}{\ell+1}
    \right)^n
    \qquad
    \text{for all }
    j\in \SSset{1}
    \,.
\end{equation}

The \emph{isogeny problem} in \(\SSgraph{1}{\ell}\) is,
given \(j_0\) and \(j\) in \(\SSset{1}\),
to find a path 
(of any length) 
from \(j_0\) to \(j\) 
in \(\SSgraph{1}{\ell}\).
The difficulty of the isogeny problem underpins
the security of the Charles--Goren--Lauter hash function
(see~\S\ref{sec:CGL} below).

The isogeny problem is supposed to be hard.
Our best generic classical path-finding algorithms
look for collisions in random walks,
and run in expected time 
the square root of the graph size:
in this case,
\(\softO(\sqrt{p})\).
In the special case of supersingular isogeny graphs,
we can make some practical improvements
but the asymptotic complexity remains the same:
given \(j_0\) and \(j\) in \(F_1(p;\ell)\),
we can compute a path \(j_0 \to j\)
in \(\softO(\sqrt{p})\) classical operations
(see~\cite{Delfs--Galbraith}).

The best known quantum algorithm for path-finding~\cite{BJS14} instead
searches for paths from \(j_0 \to j_0'\) and from \(j \to j'\), where
\(j_0'\) and \(j'\) are both in \(\FF_p\). Of the \(O(p)\) elements in
\(\SSset{1}\), there are \(O(\sqrt{p})\) elements contained in
\(\FF_p\); while a classical search for elements this sparse would
therefore run in time \(O(\sqrt{p})\), Grover's quantum
algorithm~\cite{grover} completes the search in expected time
\(O(\sqrt[4]{p})\). It remains to find a path from \(j_0'\) to \(j'\).
This could be computed classically in time \(\softO(\sqrt[4]{p})\) using
the Delfs--Galbraith algorithm, but Biasse, Jao and Sankar~\cite{BJS14}
show that a quantum computer can find paths between subfield curves in
subexponential time, yielding an overall algorithm that runs in expected
time \(O(\sqrt[4]{p})\).

We can also consider the problem of finding paths
of a fixed (and typically \emph{short}) length:
for example,
given \(e > 0\) and \(j_0\) and \(j\) in \(\SSset{1}\)
such that there exists a path \(\phi: j_0 \to \cdots \to j\)
of length \(e\),
find \(\phi\).
This problem arises in the security analysis of SIDH, for example.

\section{%%%%%%%%%%%%%%%%%%%%%%%%%%%%%%%%%%%%%%%%%%%%%%%%%%%%%%%%%%%%%%%%%%%%%%%
    Cryptosystems in the elliptic supersingular graph
}%%%%%%%%%%%%%%%%%%%%%%%%%%%%%%%%%%%%%%%%%%%%%%%%%%%%%%%%%%%%%%%%%%%%%%%%%%%%%%%
\label{sec:crypto-g-1}

\paragraph{The Charles--Goren--Lauter hash function (CGL).}
\label{sec:CGL}
Supersingular isogenies appeared in cryptography with the CGL hash function,
which operates in \(\SSgraph{1}{2}\).
Fix a base point \(j_0\) in \(\SSset{1}\),
and one of the three edges in \(\SSgraph{1}{2}\) leading 
into it: \(j_{-1} \to j_0\), say.
To hash an \(n\)-bit message \(m = (m_0,m_1,\ldots,m_{n-1})\),
we let \(m\) drive a non-backtracking walk
\(j_0 \to \cdots \to j_{n}\) on \(\SSgraph{1}{2}\):
for each \(0 \le i < n\),
we compute the two roots \(\alpha_0\) and \(\alpha_1\) of
\(\Phi_2(j_i,X)/(j_{i-1}-X)\)
to determine the neighbours of \(j_i\) that are not \(j_{i-1}\),
numbering the roots with respect to some ordering of~\(\FF_{p^2}\)
(here \(\Phi_2(Y,X)\) is the classical modular polynomial),
and set \(j_{i+1} = \alpha_{m_i}\).

Once we have 
%used the entire input \(m = (m_0,\ldots,m_{n-1})\)
%to compute a walk
computed the entire walk
\(j_0 \to \cdots \to j_{n}\),
we can derive a \(\log_2p\)-bit hash value \(H(m)\)
from the end-point~\(j_{n}\);
we call this step \emph{finalisation}.
Charles, Goren, and Lauter suggest 
applying a linear function \(f: \FF_{p^2} \to \FF_p\)
to map \(j_{n}\) to \(H(m) = f(j_n)\).
For example, if \(\FF_{p^2} = \FF_p(\omega)\)
then we can map \(j_{n} = j_{n,0} + j_{n,1}\omega\)
(with \(j_{n,0}\) and \(j_{n,1}\) in \(\FF_p\))
to \(H(m) = aj_{n,0} + bj_{n,1}\)
for some fixed random choice
of \(a\) and \(b\) in \(\FF_p\).
Heuristically,
for general \(f\),
if we suppose \(\SSset{1}\) is distributed uniformly in~\(\FF_{p^2}\),
then roughly one in twelve elements of~\(\FF_p\)
appear as hash values,
and each of those has only one expected preimage in~\(\SSset{1}\).
%then every element of \(\FF_p\) appears as a hash value,
%and we expect roughly \(12\) walk end-points \(j_n\) in \(\FF_{p^2}\)
%to map to any given~\(h\) in \(\FF_p\).

Finding a preimage for a given hash value \(h\) in \(\FF_p\)
amounts to finding a path \(j_0 \to \cdots \to j\)
such that \(f(j) = h\):
that is, solving the isogeny problem.
%Recovering the message \((m_0,\ldots,m_{n-1})\) 
%from the path is just a simple matter of recoding.
We note that inverting the finalisation seems hard:
for linear \(f: \FF_p^2 \to \FF_p\),
we know of no efficient method
which given \(h\) in \(\FF_p\)
computes a supersingular \(j\) such that \(f(j) = h\).
(Brute force search requires \(O(p)\) trials.)
Finalisation thus gives us some protection against meet-in-the-middle
isogeny algorithms.
Finding collisions and second preimages for \(H\)
amounts to finding cycles in \(\SSgraph{1}{2}\).
For well-chosen \(p\) and \(j_0\),
this is roughly as hard as the isogeny problem~\cite[\S5]{CGL}.
%
%Finding a collision for \(H\)
%generally amounts to finding a cycle in \(\SSgraph{1}{\ell}\) containing~\(j_0\).
%Given a cycle \(j_0 \to j_1 \to \cdots \to j_n = j_0\),
%for each \(0 < k < n\)
%we have a collision between the message corresponding to
%the path \(j_0 \to \cdots \to j_k\),
%and the message corresponding to the \emph{dual} of the remainder of the
%cycle---that is, the path \(j_0 \to j_{n-1} \to \cdots \to j_k\).
%The collision can easily be extended by adding paths into \(j_0\)
%and out of \(j_k\) to both inputs.
%Finding a collision in the hash function yields a cycle in exactly the
%same way, as does finding second preimages.
%
%\Bnote{Relate collisions/preimages---see CGL}

%% RESTORE IN LONG VERSION:
%\begin{remark}
%    While many treatments of the CGL hash
%    omit the finalisation step,
%    it is an important part of the hash function:
%    it ensures that we output uniform-looking \(\log_2p\)-bit strings
%    (elements of \(\FF_{p}\))
%    rather than easily-distinguishable elements of \(\FF_{p^2}\),
%    which encode to \(2\log_2p\) bits.
%    Equation~\ref{eq:distribution-1}
%    implies that
%    the end-point \(j_n\) of the walk corresponding to
%    a \(O(\log_2p)\)-bit message is uniformly distributed
%    over a \(\sim p/12\)-element set:
%    that is, we have roughly \(\log_2p\) bits of entropy 
%    encoded as a \(2\log_2p\)-bit element of \(\FF_{p^2}\).
%    Finalisation removes this redundancy.
%\end{remark}

\paragraph{SIDH.}
Jao and De Feo's 
SIDH key exchange~\cite{SIDH}
begins with a supersingular curve \(\EC_0/\FF_{p^2}\),
where \(p\) is in the form \(c\cdot2^a3^b-1\),
with fixed torsion bases \(\subgrp{P_2,Q_2} = \EC_0[2^a]\)
and \(\subgrp{P_3,Q_3} = \EC_0[3^b]\)
(which are rational because of the special form of \(p\)).
Alice computes a secret walk \(\phi_A: \EC_0 \to \cdots \to \EC_A\) of length \(a\)
in \(\SSgraph{1}{2}\),
publishing~\(\EC_A\), \(\phi_A(P_3)\), and \(\phi_A(Q_3)\);
similarly,
Bob computes a secret walk \(\phi_B: \EC_0 \to \cdots \to \EC_B\) of length \(b\)
in \(\SSgraph{1}{3}\),
publishing~\(\EC_B\), \(\phi_B(P_2)\), and \(\phi_B(Q_2)\).
The basis images allow Alice to compute \(\phi_B(\ker\phi_A)\),
and Bob \(\phi_A(\ker\phi_B)\);
Alice can thus ``repeat'' her walk starting from \(\EC_B\),
and Bob his walk from \(\EC_A\),
to arrive at curves representing the same point in \(\SSset{1}\),
which is their shared secret.

Breaking Alice's public key amounts to solving an isogeny problem in
\(\SSgraph{1}{2}\) subject to the constraint that the walk have
length \(a\) (which is particularly short).
The \(3^b\)-torsion basis may give some useful
information here, though so far this is only exploited 
in attacks on artificial variants of SIDH~\cite{Petit17}.
Similarly, breaking Bob's public key amounts to solving a length-\(b\)
isogeny problem in \(\SSgraph{1}{3}\).
Alternatively, 
we can compute these short paths
by computing endomorphism rings:
\cite[Theorem 4.1]{GPST} states that
if \(\EC\) and \(\EC'\) are in \(\SSset{1}\)
and we have \emph{explicit descriptions}
of \(\End(\EC)\) and \(\End(\EC')\),
then we can efficiently compute the \emph{shortest} path
from \(\EC\) to \(\EC'\) in \(\SSgraph{1}{\ell}\)
(see~\cite{KLPT,GPST,petit} for further details on this approach).

%% RESTORE IN LONG VERSION:
%The explicit descriptions of \(\End(\EC)\) and \(\End(\EC')\)
%let us embed them as maximal orders in the same quaternion algebra;
%we can compute the quaternion ideal connecting them
%of smallest \(\ell\)-power-norm ideal
%in polynomial time;
%and again using the explicit endomorphism rings,
%we can translate this ideal
%effectively into an \(\ell\)-power isogeny.

\section{%%%%%%%%%%%%%%%%%%%%%%%%%%%%%%%%%%%%%%%%%%%%%%%%%%%%%%%%%%%%%%%%%%%%%%%
    Abelian varieties and polarizations
}%%%%%%%%%%%%%%%%%%%%%%%%%%%%%%%%%%%%%%%%%%%%%%%%%%%%%%%%%%%%%%%%%%%%%%%%%%%%%%%

An abelian variety 
is a smooth projective algebraic group variety.
An isogeny of abelian varieties
is a surjective finite morphism \(\phi: \AV \to \AV'\) 
such that \(\phi(0_\AV) = 0_{\AV'}\).
In dimension \(g = 1\),
these definitions coincide with those for elliptic curves.

The proper higher-dimensional generalization of an elliptic curve
is a \emph{principally polarized abelian variety} (PPAV).
A \emph{polarization} of \(\AV\) is an isogeny
\(\lambda : \AV \to \widehat{\AV}\),
where \(\widehat{\AV} \cong \Pic^0(\AV)\)
is the \emph{dual} abelian variety;
\(\lambda\) is \emph{principal} if it is an isomorphism.
If \(\AV = \EC\)
is an elliptic curve,
then there is a canonical principal polarization 
\(\lambda: P \mapsto [(P) - (\infty)]\),
and every other principal polarization is isomorphic to \(\lambda\)
(via composition with a suitable translation and automorphism).
The Jacobian \(\Jac{\XC}\) of a curve \(\XC\) 
also has a canonical principal polarization
defined by the theta divisor,
which essentially corresponds to an embedding of \(\XC\) in~\(\Jac{\XC}\),
and thus connects \(\Jac{\XC}\)
with the divisor class group of~\(\XC\).

%% RESTORE BELOW IN LONG VERSION
%While polarizations may seem somewhat abstract in definition,
%they correspond to projective embeddings (up to projective automorphism).
%Concretely, if we have coordinates and
%defining equations for an object, then it is polarized.

We need a notion of compatibility between isogenies and principal polarizations.
% FIXME: cite Milne (polarization and pairing sections)
First, recall that every isogeny \(\phi: \AV \to \AV'\)
has a dual isogeny \(\widehat{\phi}: \widehat{\AV'} \to \widehat{\AV}\).
Now, if \((\AV,\lambda)\) and \((\AV',\lambda')\) are PPAVs,
then \(\phi: \AV \to \AV'\) is an 
\emph{isogeny of PPAVs}
if \(\widehat{\phi}\circ\lambda'\circ\phi = [d]\lambda\)
for some integer \(d\).
We then have \(\dualof{\phi}\circ\phi = [d]\) on \(\AV\)
(and \(\phi\circ\dualof{\phi} = [d]\) on \(\AV'\)),
where \(\dualof{\phi} := \lambda^{-1}\circ\widehat{\phi}\circ\lambda'\)
is the \emph{Rosati dual}.
Intuitively,
\(\phi\) will be defined by homogeneous polynomials of degree \(d\)
with respect to projective coordinate systems
on \(\AV\) and \(\AV'\)
corresponding to \(\lambda\) and \(\lambda'\),
respectively.
There is a simple criterion on subgroups \(S \subset \AV[d]\)
to determine when an isogeny with kernel \(S\)
is an isogeny of PPAVs:
the subgroup should be \emph{Lagrangian}.\footnote{%
    Isogenies with strictly smaller kernels
    exist---isogenies with cyclic kernel
    are treated algorithmically in
    \cite{Dudeanu--Jetchev--Robert--Vuille}--- but
    these isogenies are not relevant to this investigation.
}

\begin{definition}
    \label{def:Lagrangian}
    Let \(\AV/\FFbar_p\) be a PPAV
    and let \(m\) be an integer prime to \(p\).
    A \emph{Lagrangian subgroup} of \(\AV[m]\)
    is a maximal \(m\)-Weil isotropic subgroup of \(\AV[m]\).
%    \footnote{%
%        The sudden appearance of the Weil pairing might seem surprising here,
%        but the choice of a principal polarization is crucial
%        when defining Weil pairings on an abstract abelian variety
%        from general statements about the duality between \(\AV[m]\) and \(\widehat{\AV}[m]\)
%        (see~\cite[\S16.8]{Milne} for details).
%        Thus the Weil pairing is important in determining which
%        isogenies respect principal polarizations.
%    }
\end{definition}

If \(\ell \not= p\) is prime, then
\(\AV[\ell^n] \cong (\ZZ/\ell^n\ZZ)^{2g}\)
for all \(n > 0\). 
If \(S \subset \AV[\ell]\) is Lagrangian,
then 
\(S \cong (\ZZ/\ell\ZZ)^g\).
Any Lagrangian subgroup of \(A[\ell^n]\)
is isomorphic to 
\((\ZZ/\ell\ZZ)^{n_1}\times\cdots\times(\ZZ/\ell\ZZ)^{n_g}\)
for some \(n_1 \ge \dots \ge n_g\)
with \(\sum_i n_i = gn\)
(though not every \((n_1,\ldots,n_g)\) with \(\sum_i n_i = gn\)
occurs in this way).

We now have almost everything we need to generalize supersingular
isogeny graphs from elliptic curves to higher dimension.
The elliptic curves will be replaced by PPAVs;
\(\ell\)-isogenies
will be replaced by isogenies with Lagrangian kernels in the
\(\ell\)-torsion---called
\((\ell,\ldots,\ell)\)-isogenies---and
the elliptic dual isogeny will be replaced by the Rosati dual.
It remains to define the right analogue of supersingularity 
in higher dimension, and study the resulting graphs.

%Given any two PPAVs \(\AV_1\) and \(\AV_2\),
%with principal polarizations \(\lambda_1\) and \(\lambda_2\),
%the direct product
%\(\AV_1\times\AV_2\)
%can be equipped with the principal \emph{product polarization} 
%\( 
%    \lambda_1\times\lambda_2
%    :
%    \AV_1\times\AV_2
%    \to
%    \widehat{\AV_1}\times\widehat{\AV_2}
%    \cong
%    \widehat{\AV_1\times\AV_2}
%\).
%The product polarization may be only one of many polarizations on the
%product.
%In particular, a product of supersingular elliptic curves
%can have a very high number of non-isomorphic principal polarizations,
%as we will see.

\section{%%%%%%%%%%%%%%%%%%%%%%%%%%%%%%%%%%%%%%%%%%%%%%%%%%%%%%%%%%%%%%%%%%%%%%%
    The superspecial isogeny graph in dimension \(g\)
}%%%%%%%%%%%%%%%%%%%%%%%%%%%%%%%%%%%%%%%%%%%%%%%%%%%%%%%%%%%%%%%%%%%%%%%%%%%%%%%
\label{sec:SSgraph}

%Let \(\AV/\FFbar_p\) be a PPAV of dimension \(g\).
%The \(p\)-power torsion is 
%\(\AV[p^n] \cong (\ZZ/p^n\ZZ)^r\)
%for all \(n > 0\),
%for some \(0 \le r \le g\) called the \emph{\(p\)-rank} of \(\AV\).

We need an appropriate generalization of elliptic supersingularity
to \(g > 1\).
%This is a subtle matter,
As explained in~\cite{Castryck--Decru--Smith},
it does not suffice to simply take the PPAVs \(\AV/\FFbar_{p}\) with
\(\AV[p] = 0\).

\begin{definition}
    A PPAV \(\AV\) is \textbf{supersingular}
    if the Newton polygon of its Frobenius endomorphism
    has all slopes equal to \(1/2\),
    and \textbf{superspecial}
    if Frobenius acts as \(0\) on \(H^1(\AV,\mathcal{O}_\AV)\).
    Superspecial implies supersingular;
    in dimension \(g = 1\),
    the definitions coincide.
\end{definition}

All supersingular PPAVs are isogenous to 
a product of supersingular elliptic curves.
Superspecial abelian varieties are isomorphic to 
a product of supersingular elliptic curves,
though generally only as \emph{unpolarized} abelian varieties.
The special case of Jacobians
is particularly relevant for us when constructing examples:
\(\Jac{\XC}\) is superspecial if and only if
the Hasse--Witt matrix of \(\XC\) vanishes.

It is argued in~\cite{Castryck--Decru--Smith}
that the world of superspecial (and not supersingular) PPAVs 
is the correct setting for supersingular isogeny-based cryptography.
We will not repeat this argument here;
but in any case, every higher-dimensional ``supersingular'' cryptosystem
proposed so far has in fact been superspecial.

In analogy with the elliptic supersingular graph, then,
we define
\[
    \SSset{g} 
    :=
    \left\{
        \AV : \AV/\FF_{p^2} \text{ is a superspecial \(g\)-dimensional PPAV}
    \right\}/\cong
    \,.
\]
Our first task is to estimate the size of \(\SSset{g}\).
\begin{lemma}
    \label{lemma:number-of-vertices}
    We have
    \(
        \#\SSset{g} = 
        O(p^{g(g+1)/2})
    \).
\end{lemma}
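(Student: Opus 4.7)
My plan is to reduce the counting problem to one about polarizations on a fixed unpolarized abelian variety, and then invoke the classical mass formula of Hashimoto--Ibukiyama to read off the asymptotics.

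The first step rests on the structure theorem for superspecial abelian varieties (due to Deligne, Ogus, Shioda): if \(\AV/\FFbar_p\) is superspecial of dimension \(g \ge 2\), then as an \emph{unpolarized} abelian variety it is isomorphic to \(\EC_0^g\), where \(\EC_0\) is any fixed supersingular elliptic curve. Hence every superspecial PPAV of dimension \(g\) can be represented as a pair \((\EC_0^g,\lambda)\) where \(\lambda\) is a principal polarization, and two such pairs give isomorphic PPAVs exactly when the polarizations lie in the same orbit under \(\Aut(\EC_0^g) = \mathrm{GL}_g(\OO)\), with \(\OO\) a maximal order in the quaternion algebra \(B_{p,\infty}\) ramified at \(p\) and \(\infty\). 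Thus \(\#\SSset{g}\) is the number of \(\mathrm{GL}_g(\OO)\)-equivalence classes of positive-definite unimodular Hermitian \(\OO\)-lattices of rank \(g\).

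Next, I would apply the mass formula of Hashimoto--Ibukiyama (generalizing Eichler's for \(g=1\) and Ekedahl's asymptotics), which evaluates
\[
    \mathrm{Mass}(\SSset{g})
    :=
    \sum_{(\AV,\lambda) \in \SSset{g}} \frac{1}{\#\Aut(\AV,\lambda)}
    =
    C_g \cdot \prod_{k=1}^{g} \bigl(p^k + (-1)^k\bigr),
\]
where \(C_g\) is an explicit rational constant (a product of values of the Riemann zeta function at even integers, equivalently of Bernoulli numbers) depending only on \(g\). The leading term on the right is \(C_g \cdot p^{g(g+1)/2}\), so \(\mathrm{Mass}(\SSset{g}) = O(p^{g(g+1)/2})\) with the implied constant depending only on \(g\).

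To pass from the mass to the cardinality, I use the trivial lower bound \(\#\Aut(\AV,\lambda) \ge 2\), since \([-1]_\AV\) is always a nontrivial automorphism of any PPAV. This gives
\[
    \#\SSset{g} \le 2\,\mathrm{Mass}(\SSset{g}) = O(p^{g(g+1)/2}),
\]
as required. The main obstacle is really just citing the Hashimoto--Ibukiyama formula correctly; the structure theorem reduction is standard, and once the mass is known the upper bound is immediate. If one wanted to avoid the full mass formula, an alternative route is to bound the class number of principal genera of rank-\(g\) unimodular Hermitian \(\OO\)-lattices directly via reduction theory, which produces the same \(p^{g(g+1)/2}\) growth rate up to a constant depending on \(g\).
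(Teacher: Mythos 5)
Your approach is the same as the paper's: both reduce to the Hashimoto--Ibukiyama mass formula, which the paper cites (together with Ekedahl's~\cite{Ekedahl} treatment) and which evaluates to a polynomial in \(p\) of degree \(g(g+1)/2\). Your preamble about the structure theorem for superspecial abelian varieties correctly contextualizes why the mass formula applies, and is a reasonable elaboration of what the paper leaves implicit.

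However, your final step passing from the mass to the cardinality has the inequality running in the wrong direction. You write that \(\#\Aut(\AV,\lambda) \ge 2\) implies \(\#\SSset{g} \le 2\,\mathrm{Mass}(\SSset{g})\). But \(\#\Aut(\AV,\lambda) \ge 2\) gives \(1/\#\Aut(\AV,\lambda) \le 1/2\) for each summand, hence
\(
\mathrm{Mass}(\SSset{g}) = \sum_{\AV} 1/\#\Aut(\AV,\lambda) \le \tfrac{1}{2}\#\SSset{g}
\),
i.e.\ \(\#\SSset{g} \ge 2\,\mathrm{Mass}(\SSset{g})\) --- a \emph{lower} bound on \(\#\SSset{g}\), which is not what the lemma asserts. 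To obtain the stated upper bound you need a uniform \emph{upper} bound on the automorphism groups: for any \(g\)-dimensional PPAV over \(\FFbar_p\) with \(p > 3\), the action of \(\Aut(\AV,\lambda)\) on \(\AV[3]\cong(\ZZ/3\ZZ)^{2g}\) is faithful (Serre's lemma / Minkowski-type bound), so \(\#\Aut(\AV,\lambda) \le \#\mathrm{GL}_{2g}(\ZZ/3\ZZ) =: M_g\), a constant depending only on \(g\). Then \(\mathrm{Mass}(\SSset{g}) \ge \#\SSset{g}/M_g\), whence \(\#\SSset{g} \le M_g\cdot\mathrm{Mass}(\SSset{g}) = O(p^{g(g+1)/2})\). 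With this correction the argument is sound and indeed fills in a detail that the paper's own one-line proof glosses over.
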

\begin{proof}
    See~\cite[\S5]{Ekedahl}.
    This follows from the Hashimoto--Ibukiyama mass formula 
    \[
        \sum_{A \in \SSset{g}} \frac{1}{\#\Aut(A)}
        =
        \prod_{i=1}^g \frac{B_{2i}}{4i}(1 + (-p)^i)
        \,,
    \]
    where \(B_{2i}\) is the \(2i\)-th Bernoulli number.
    In particular,
    \(\#\SSset{g}\) is a polynomial in \(p\)
    of degree \(\sum_{i=1}^gi = g(g+1)/2\).
    \qed
\end{proof}
Note that \(\#\SSset{g}\) grows \emph{quadratically} in
\(g\) (and exponentially in \(\log p\)):
we have \(\#\SSset{1} = O(p)\),
\(\#\SSset{2} = O(p^3)\),
\(\#\SSset{3} = O(p^6)\),
and \(\#\SSset{4} = O(p^{10})\).

For each prime \(\ell \not= p\),
we let \(\SSgraph{g}{\ell}\)
denote the (directed) graph on \(\SSset{g}\)
whose edges are \(\FFbar_p\)-isomorphism classes
of \((\ell,\cdots,\ell)\)-isogenies of PPAVs:
that is,
isogenies whose kernels are Lagrangian subgroups of the
\(\ell\)-torsion.
Superspeciality is invariant under
\((\ell,\ldots,\ell)\)-isogeny,
so to determine the degree of the vertices of \(\SSgraph{g}{\ell}\)
it suffices to enumerate the Lagrangian subgroups of a \(g\)-dimensional
PPAV.
A simple counting argument yields
Lemma~\ref{lemma:number-of-Lagrangian-subgroups}.

\begin{lemma}
    \label{lemma:number-of-Lagrangian-subgroups}
    If \(\AV/\FFbar_p\) is a \(g\)-dimensional PPAV,
    then
    the number of Lagrangian subgroups of \(\AV[\ell]\),
    and hence the number of edges leaving \(\AV\) in
    \(\SSgraph{g}{\ell}\),
    is
    \[
        N_g(\ell)
        :=
        \sum_{d=0}^g
        \qbinom{g}{d}
        \cdot
        \ell^{\binom{g-d+1}{2}}
        \,.
    \]
    (The \(\ell\)-binomial coefficient 
    \(
        \qbinom{n}{k}
        := 
        \frac{
            (n)_\ell\cdots(n-k+1)_\ell
        }{
            (k)_\ell\cdots(1)_\ell
        }
    \),
    where 
    \(
        (i)_\ell := \frac{\ell^i-1}{\ell-1}
    \),
    counts the \(k\)-dimensional subspaces of \(\FF_\ell^n\).)
    %We note that \(\qbinom{n}{k}\) is a polynomial in \(\ell\)
    %of degree \(k(n-k)\).
    In particular,
    \(\SSgraph{g}{\ell}\) is \(N_g(\ell)\)-regular;
    and
    \(N_g(\ell)\) is a polynomial in \(\ell\) of degree \(g(g+1)/2\).
\end{lemma}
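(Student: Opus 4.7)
The plan is to translate the enumeration into linear symplectic algebra over $\FF_\ell$. Since $\ell \ne p$, we have $\AV[\ell] \cong (\ZZ/\ell\ZZ)^{2g}$, and the principal polarization endows this group with a non-degenerate alternating $\FF_\ell$-bilinear pairing, namely the mod-$\ell$ Weil pairing. Under this pairing, the Lagrangian subgroups of $\AV[\ell]$ are precisely the maximal isotropic $\FF_\ell$-subspaces, each of dimension $g$. The resulting count therefore depends only on $g$ and $\ell$, not on $\AV$, which already yields the regularity of $\SSgraph{g}{\ell}$.

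To enumerate the maximal isotropic subspaces in $V = \FF_\ell^{2g}$, I would fix one reference Lagrangian $L_0$ and stratify the Lagrangians $L$ by the integer $d := \dim_{\FF_\ell}(L \cap L_0)$, which ranges from $0$ (the transverse case) to $g$ (the case $L = L_0$). For each $d$, the intersection $M := L \cap L_0$ can be any $d$-dimensional subspace of $L_0$, contributing $\qbinom{g}{d}$ choices.

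Fix such an $M$. Any Lagrangian $L$ containing $M$ lies inside $M^\perp$ by isotropy, hence descends to a Lagrangian $\overline{L}$ in the symplectic reduction $M^\perp/M$; the condition $L \cap L_0 = M$ is exactly the condition that $\overline{L}$ be a Lagrangian complement of $\overline{L_0} := L_0/M$ in this reduced symplectic space of dimension $2(g-d)$. The crucial step is the classical parametrisation of the Lagrangian complements of a fixed Lagrangian in a $2(g-d)$-dimensional symplectic space: once any single complement $W$ is chosen, every other complement is the graph of a linear map $W \to \overline{L_0}$, and the symplectic form rewrites the Lagrangian condition on the graph as the requirement that the associated bilinear form on $W$ be symmetric. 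Symmetric $\FF_\ell$-bilinear forms on $W$ constitute a vector space of dimension $\binom{g-d+1}{2}$, so there are $\ell^{\binom{g-d+1}{2}}$ such complements.

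Summing over $d$ yields $N_g(\ell) = \sum_{d=0}^g \qbinom{g}{d}\, \ell^{\binom{g-d+1}{2}}$. For the degree statement: $\qbinom{g}{d}$ is a polynomial in $\ell$ of degree $d(g-d)$, so the $d$-th summand has degree $d(g-d) + \binom{g-d+1}{2}$, which as a function of $d$ is maximised at $d=0$ with value $\binom{g+1}{2} = g(g+1)/2$. I expect the only genuinely subtle point to be the symmetric-form parametrisation of Lagrangian complements; everything else is symplectic bookkeeping together with the $\ell$-binomial counting formula recalled in the statement.
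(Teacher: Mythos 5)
Your argument is correct and complete. The paper itself does not supply a proof for this lemma---it only remarks that ``a simple counting argument'' yields the formula---so your write-up genuinely fills a gap rather than duplicating the text. The structure you use (pass to the symplectic $\FF_\ell$-vector space via the mod-$\ell$ Weil pairing; fix a reference Lagrangian $L_0$; stratify by $d = \dim(L \cap L_0)$; reduce modulo $M = L \cap L_0$ to the symplectic quotient $M^\perp/M$; parametrise Lagrangian complements of $L_0/M$ by symmetric bilinear forms on a fixed transversal) is the standard enumeration of Lagrangians over a finite field, and each step is justified. Two small points worth spelling out in a final version: (i) that every $d$-dimensional $M \subseteq L_0$ actually arises as an intersection $L \cap L_0$, which follows a posteriori from the count $\ell^{\binom{g-d+1}{2}} \ge 1$ being nonzero but deserves a word; and (ii) that lifting a Lagrangian complement in $M^\perp/M$ back to a Lagrangian of the full space with the prescribed intersection $M$ is a bijection, which is where the isotropy of $M$ and the inclusion $L \subseteq M^\perp$ are used. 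Your degree computation ($\deg_\ell \qbinom{g}{d} = d(g-d)$, so the $d$-th summand has degree $d(g-d) + \binom{g-d+1}{2}$, maximised at $d=0$) is also correct.
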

%\begin{proof} : FIXME: for longer version \qed \end{proof}

We do not yet have analogues of Pizer's theorem to guarantee
that \(\SSgraph{g}{\ell}\) is Ramanujan when \(g > 1\),
though this is proven for superspecial
abelian varieties with real multiplication~\cite{Nicole}.
We therefore work on the following hypothesis:

\begin{hypothesis}
    \label{hypothesis:1}
    The graph \(\SSgraph{g}{\ell}\) is Ramanujan.
\end{hypothesis}

We need Hypothesis~\ref{hypothesis:1}
in order to obtain the following analogue of Eq.~\ref{eq:distribution-1}
(a standard random walk theorem, as in~\cite[\S3]{HLW}):
if we fix a vertex \(\AV_0\)
and consider \(n\)-step random walks \(\AV_0\to\cdots\to\AV_n\),
then
\begin{equation}
    \label{eq:distribution-g}
    \left|
        \mathrm{Pr}[\AV_n \cong \AV] - \frac{1}{\#\SSset{g}}
    \right|
    \le
    \left(
        \frac{2\sqrt{N_g(\ell)-1}}{N_g(\ell)}
    \right)^n
    \qquad
    \text{for all }
    \AV \in \SSset{g}
    \,.
\end{equation}
That is,
random walks in \(\SSgraph{g}{\ell}\)
converge exponentially quickly to the uniform
distribution: % on \(\SSset{g}\):
after \(O(\log p)\) steps in \(\SSgraph{g}{\ell}\)
we are uniformly distributed over \(\SSset{g}\).
Given specific \(\ell\) and \(g\),
we can explicitly derive
the constant hidden by the big-\(O\) 
to bound the minimum \(n\) yielding a distribution
within \(1/\#\SSset{g}\) of uniform.

\begin{remark}
    Existing proposals of higher-dimensional supersingular isogeny-based
    cryptosystems all implicitly assume (special cases of)
    Hypothesis~\ref{hypothesis:1}.
    For the purposes of attacking their underlying hard problems,
    we are comfortable making the same hypothesis.
    After all, if our algorithms are less effective because the expansion
    properties of \(\SSgraph{g}{\ell}\) are less than ideal,
    then the cryptosystems built on \(\SSgraph{g}{\ell}\) will fail to
    be effective by the same measure.
\end{remark}

%For \(g = 1\),
%we had \(\SSset{1} \subset \FF_{p^2}\)
%via the \(j\)-invariant,
%and edges can be computed using modular polynomials
%or via explicit computations with kernel subgroups on elliptic curves.
%We will return to the analogous questions
%of how to efficiently represent elements of \(\SSset{g}\),
%and how to efficiently compute isogenies,
%below.

%Similarly, elements of \(\SSset{g}\) can be realised as tuples of elements of
%\(\FF_{p^2}\) \Bnote{FIXME: really?}
%given a suitable tuple of isomorphism invariants for \(g\)-dimensional PPAVs.
%This is generally easier if we ``rigidify'' the PPAVs by adding extra structure;
%for example, if we mark the \(n\)-torsion of \(\AV\),
%then we can use theta constants of level \(n\) here.
%But if we want to work with simple isomorphism classes of PPAVs,
%then identifying \(\SSset{g}\) with a subset of \(\FF_{p^2}^n\)
%(for some \(n\)) is a more subtle matter,
%as we will see below in~\S\ref{sec:g=2}.

\section{%%%%%%%%%%%%%%%%%%%%%%%%%%%%%%%%%%%%%%%%%%%%%%%%%%%%%%%%%%%%%%%%%%%%%%%
    Superspecial cryptosystems in dimension~\(g=2\)
}%%%%%%%%%%%%%%%%%%%%%%%%%%%%%%%%%%%%%%%%%%%%%%%%%%%%%%%%%%%%%%%%%%%%%%%%%%%%%%%

Before attacking the isogeny problem in \(\SSgraph{g}{\ell}\),
we consider
some of the cryptosystems that have recently been defined in~\(\SSgraph{2}{\ell}\).
This will also illustrate some methods for computing in these graphs,
and as well as special cases of the general phenomena that can help us solve the
isogeny problem more efficiently.
For the rest of this section, therefore, we restrict to dimension~\(g=2\).

Every \(2\)-dimensional PPAV
is isomorphic (as a PPAV)
to either the Jacobian of a genus-2 curve,
or to a product of two elliptic curves.
We can therefore split \(\SSset{2}\) naturally into two disjoint subsets:
\(
    \SSset{2} = \SSset{2}^J \sqcup \SSset{2}^E
\),
where
\begin{align*}
    \SSset{2}^J
    & := 
    \left\{
        \AV \in \SSset{2}
        :
        \AV \cong \Jac{\XC}
        \text{ with }
        g(\XC) = 2
    \right\}
    \quad \text{and}
    \\
    \SSset{2}^E
    & :=
    \left\{
        \AV \in \SSset{2}
        :
        \AV \cong \EC_1 \times \EC_2
        \text{ with }
        \EC_1, \EC_2 \in \SSset{1}
    \right\}
    \,.
\end{align*}
Vertices in \(\SSset{2}^J\) are ``general'',
while vertices in \(\SSset{2}^E\) are ``special''.
We can make the estimates implied by
Lemma~\ref{lemma:number-of-vertices}
more precise:
if \(p > 5\), then
\[
    \#\SSset{2}^J
    =
    \frac{1}{2880}p^3
    + 
    \frac{1}{120}p^2
    \qquad 
    \text{and}
    \qquad
    \#\SSset{2}^E
    =
    \frac{1}{288}p^2
    +
    O(p)
\]
(see e.g.~\cite[Proposition~2]{Castryck--Decru--Smith}).
In particular,
%\(\#\SSset{2} = \frac{1}{2880}(p^3 + 34p^2) + O(p)\) and 
\(\#\SSset{2}^E/\#\SSset{2} = 10/p + o(1)\).
% alternatively, \(\#\SSset{2}^E \approx (3 - 3/25)(\#\SSset{2})^{2/3}\).

\paragraph{Takashima's hash function.}

Takashima~\cite{Takashima}
was the first to generalize CGL
to \(g = 2\).
We start with a distinguished vertex \(\AV_0\)
in \(\SSset{2}\),
and a distinguished incoming edge
\(\AV_{-1}\to \AV_0\) in \(\SSgraph{2}{\ell}\).
Each message \(m\) then drives a walk in \(\SSgraph{2}{\ell}\):
at each vertex we have a choice of 14 forward isogenies
(the 15th is the dual of the previous,
which is a prohibited backtracking step).
The message \(m\) is therefore coded in base 14.
While traversing the graph,
the vertices are handled as concrete genus-2 curves
representing the isomorphism classes of their Jacobians.
Lagrangian subgroups correspond to factorizations of the hyperelliptic
polynomials into a set of three quadratics,
and the isogenies are computed using Richelot's
formul\ae{} (see~\cite[Chapters 9-10]{Cassels--Flynn}
and \cite[Chapter 8]{Smith}).
We derive a hash value 
From the final vertex \(\AV_n\) 
as the Igusa--Clebsch invariants of the Jacobian, in \(\FF_{p^2}^3\);
Takashima does not define a finalisation map 
(into \(\FF_p^3\), for example).

Flynn and Ti observe in~\cite{flynnti}
that this hash function has a fatal weakness:
it is trivial to compute length-4 cycles starting from any vertex
in \(\SSgraph{2}{2}\),
as in Example~\ref{ex:trivial-4-cycles}.
Every cycle
produces infinitely many hash collisions.

\begin{example}
    \label{ex:trivial-4-cycles}
    Given some \(\AV_0\) in \(\SSset{2}\),
    choose a point \(P\) of order 4 on \(\AV_0\).
    There exist \(Q\) and \(R\) in \(\AV_0[2]\)
    such that 
    \(e_2([2]P,Q) = 1\)
    and \(e_2([2]P,R) = 1\),
    but \(e_2(Q,R) \not= 1\).
    The Lagrangian subgroups
    \(K_0 := \subgrp{[2]P,Q}\)
    and 
    \(K_0' := \subgrp{[2]P,R}\)
    of \(\AV_0[2]\)
    are kernels of \((2,2)\)-isogenies
    \(\phi_0: \AV_0 \to \AV_1 \cong \AV_0/K_0\)
    and 
    \(\phi_0': \AV_0 \to \AV_1' \cong \AV_0/K_0'\);
    and in general, \(\AV_1 \not\cong \AV_1'\).
    Now \(K_1 := \phi_0(K_0')\)
    and \(K_1' := \phi_0'(K_0)\)
    are Lagrangian subgroups of \(\AV_1[2]\).
    Writing \(I_1 = \ker\dualof{\phi_1}\)
    and \(I_1' = \ker\dualof{(\phi_1')}\),
    we see that \(K_1\cap I_1 = \subgrp{\phi_1(R)}\)
    and \(K_1'\cap I_1' = \subgrp{\phi_1(Q)}\).
    We thus define another pair of \((2,2)\)-isogenies,
    \(\phi_1: \AV_1 \to \AV_2 \cong \AV_1/K_1\)
    and 
    \(\phi_1': \AV_1' \to \AV_2' \cong \AV_1'/K_1'\).
    We have \(\ker(\phi_1\circ\phi_0) = \ker(\phi_1'\circ\phi_0')\),
    so \(\AV_2 \cong \AV_2'\).
    Now let \(\psi := \dualof{(\phi_0')}\circ\dualof{(\phi_1')}\circ\phi_1\circ\phi_0\).
    We have \(\psi \cong [4]_{\AV_0}\),
    but \(\psi\) does not factor over \([2]_{\AV_0}\)
    (since \(\AV_1\not\cong\AV_1'\)).
    Hence
    \(\psi\) represents a nontrivial cycle of length \(4\) in the graph.
\end{example}

The ubiquity of these length-4 cycles does not mean that \(\SSgraph{2}{2}\) is no use for hashing:
it just means that we must use a stronger rule than 
backtrack-avoidance when selecting steps in a walk.
The following hash function does just this.

\paragraph{The Castryck--Decru--Smith hash function (CDS).}

Another generalization of CGL 
from \(\SSgraph{1}{2}\) to \(\SSgraph{2}{2}\),
neatly avoiding the length-4 cycles of
Example~\ref{ex:trivial-4-cycles},
is defined in~\cite{Castryck--Decru--Smith}.
Again, we fix a vertex \(\AV_0\)
and an isogeny \(\phi_{-1}: \AV_{-1} \to \AV_0\);
we let \(I_0 \subset \AV_0[2]\) be 
the kernel of the Rosati dual \(\dualof{\phi}_{-1}\).
Now, let \(m = (m_0,\ldots,m_{n-1})\)
be a \(3n\)-bit message,
with each \(0 \le m_i < 8\).
The sequence \((m_0,\ldots,m_{n-1})\)
drives a path through \(\SSgraph{2}{2}\)
as follows:
our starting point is \(\AV_0\),
with its distinguished subgroup \(I_0\)
corresponding to the edge \(\AV_{-1} \to \AV_0\).
For each \(0 \le i < n\),
we compute the set of eight Lagrangian subgroups \(\{S_{i,0},\ldots,S_{i,7}\}\) of \(\AV_i[2]\)
such that \(S_{i,j} \cap I_i = 0\),
numbering them according to some fixed ordering on the
encodings of Lagrangian subgroups.
Then we compute 
\(\phi_i: \AV_i \to \AV_{i+1} \cong \AV_i/S_{i,m_i}\),
and let \(I_{i+1} := \phi_i(\AV_i[2]) = \ker\dualof{\phi_i}\).
Once we have computed the entire walk \(\AV_0 \to \cdots \to \AV_n\),
we can derive a \(3\log_2p\)-bit hash value \(H(m)\) from
the isomorphism class of \(\AV_n\)
(though such a finalisation is unspecified in~\cite{Castryck--Decru--Smith}).
The subgroup intersection condition ensures that the 
composition of the isogenies in the walk is a
\((2^n,\ldots,2^n)\)-isogeny,
thus protecting us from the small cycles of
Example~\ref{ex:trivial-4-cycles}.

Putting this into practice
reveals an ugly technicality.
As in Takashima's hash function,
we compute with vertices as genus-2 curves,
encoded by their hyperelliptic polynomials,
with 
%kernel subgroups encoded as quadratic splittings and 
\((2,2)\)-isogenies computed using Richelot's formul\ae{}.
Walk endpoints are mapped to 
Igusa--Clebsch invariants in \(\FF_{p^2}^3\).
But these curves, formul\ae{}, and invariants
only exist for vertices in \(\SSset{2}^J\).
We can handle vertices in \(\SSset{2}^E\)
as pairs of elliptic curves,
with pairs of \(j\)-invariants for endpoints,
and there are explicit formul\ae{} to compute
isogenies in to and out of \(\SSset{2}^E\)
(see e.g.~\cite[\S3]{Castryck--Decru--Smith}).
Switching between representations and algorithms
(to say nothing of finalisation,
where \(\SSset{2}^E\) would have a smaller, easily distinguishable,
and easier-to-invert image)
seems like needless fiddle when
the probability of stepping onto a vertex in \(\SSset{2}^E\)
is only \(O(1/p)\),
which is negligible for cryptographic~\(p\).

In~\cite{Castryck--Decru--Smith}, this issue
was swept under the rug
by defining simpler algorithms which efficiently walk in
the subgraph of \(\SSgraph{2}{2}\) supported on \(\SSset{2}^J\),
and simply fail if they walk into \(\SSset{2}^E\).
This happens with probability \(O(1/p)\),
which may seem acceptable---however, 
this also means that
\emph{it is exponentially easier to find a message where the hash fails
than it is to find a preimage} with a square-root algorithm.
The former requires \(O(p)\) work,
the latter \(O(p^{3/2})\).
In this, as we will see,
the simplified CDS hash function
contains the seeds of its own destruction.

\paragraph{Genus-2 SIDH.}
%In~\cite{costellokummer},
%the first author used isogenies on abelian surfaces---specifically,
%Weil restrictions of supersingular elliptic curves---with the aim of accelerating
%parts of SIDH computations.
%In essence, one can transform a walk in \(\SSgraph{1}{2}\)
%into a walk in a special subgraph of \(\SSgraph{2}{2}\).

Flynn and Ti~\cite{flynnti} defined an SIDH analogue in dimension \(g = 2\).
As in the hash functions above,
Richelot isogenies are used for Alice's steps in \(\SSgraph{2}{2}\),
while explicit formul\ae{} for \((3,3)\)-isogenies on Kummer surfaces
are used for Bob's steps in \(\SSgraph{2}{3}\).
Walks may (improbably) run into \(\SSset{2}^E\),
as with the hash functions above;
but the same work-arounds apply without affecting security.
(Further, if we generate a public key in \(\SSset{2}^E\),
then we can discard it and generate a new one in \(\SSset{2}^J\).)
%(Further, if we hit \(\SSset{2}^E\) during key generation,
%then we are free to fail and re-start the algorithm.)
As with SIDH, breaking public keys 
amounts to computing \emph{short} solutions to the isogeny problem
in \(\SSgraph{2}{2}\) or \(\SSgraph{2}{3}\),
though presumably endomorphism attacks generalizing~\cite{petit}
also exist.

%\begin{remark}
%    Each of the cryptosystems above implicitly works in the subgraph of
%    \(\SSgraph{2}{\ell}\) supported on \(\SSset{2}^J\),
%    the vertices corresponding to Jacobians of genus-2 curves,
%    either for algorithmic simplicity (as in Takashima and Flynn--Ti),
%    or by definition (as in CDS).
%    Working exclusively with vertices representable by Jacobians
%    is possible for \(g = 2\) because, as noted above,
%    a general \(2\)-dimensional PPAV is isomorphic to a Jacobian.
%    In higher dimensions, we are not so lucky.
%    For \(g = 3\),
%    a general PPAV is isomorphic to a Jacobian~\cite{Oort--Ueno},
%    but possibly over an inconvenient quadratic extension.
%    For \(g > 3\), however,
%    a general PPAV looks nothing like a Jacobian.
%    Indeed, 
%    the moduli of Jacobians form a \((3g-3)\)-dimensional
%    subspace of the \(g(g+1)/2\)-dimensional space of PPAVs:
%    the Jacobians go from being universal in universal in \(g = 1\)
%    and ubiquitous in \(g = 2\)
%    to vanishingly rare in \(g > 3\).
%    Indeed, there is no particular reason why a quotient of a
%    \(g\)-dimensional Jacobian by a Lagrangian subgroup 
%    should be another Jacobian when \(g > 3\);
%    the subgraph of \(\SSgraph{g}{\ell}\)
%    supported on \(\SSset{g}^J\) is therefore not only vanishingly
%    small, but it may well have fewer edges than vertices.
%\end{remark}

\section{%%%%%%%%%%%%%%%%%%%%%%%%%%%%%%%%%%%%%%%%%%%%%%%%%%%%%%%%%%%%%%%%%%%%%%%
    Attacking the isogeny problem in superspecial graphs
}%%%%%%%%%%%%%%%%%%%%%%%%%%%%%%%%%%%%%%%%%%%%%%%%%%%%%%%%%%%%%%%%%%%%%%%%%%%%%%%

We want to solve the isogeny problem in \(\SSgraph{g}{\ell}\).
We can always do this using random walks
in \(O(\sqrt{\#\SSset{g}}) = O(p^{g(g+1)/4})\) classical steps.
%\Bnote{FIXME: quantum cost. CC: Grovering the full set directly would also be square root set size, so no improvement. This is analogous to Biasse--Jao--Sankar citing $p^{1/2}$ complexity for both classical and quantum before their paper. Fortunately, it's our style of algorithm (rather than solving path finding directly) where you're searching for a set in a much bigger set where Grover is powerful. Summary -- we need not say anything quantum here.}

Our idea is that
\(\SSset{g-1}\times\SSset{1}\)
maps into \(\SSset{g}\)
by mapping a pair of PPAVs
to their product equipped with the product polarization,
and the image of \(\SSset{g-1}\times \SSset{1}\)
represents a large set of easily-identifiable ``distinguished vertices''
in \(\SSgraph{g}{\ell}\).
Indeed,
since the map 
\(\SSset{g-1}\times\SSset{1}\to\SSset{g}\)
is generically finite, of degree independent of \(p\),
Lemma~\ref{lemma:number-of-vertices}
implies that
\begin{equation}
    \label{eq:size-relation}
    \#\SSset{g}/\#(\text{image of } \SSset{g-1}\times\SSset{1}) = O(p^{g-1})
    \qquad
    \text{for } g > 1
    \,.
\end{equation}
We can efficiently detect such a step into a product PPAV
in a manner analogous to that of the failure of the CDS hash function:
for example, by the breakdown of 
a higher-dimensional analogue of Richelot's formul\ae{} such
as~\cite{Lubicz--Robert}.

We can walk into this subset,
then recursively solve the path-finding problem in
the subgraphs
\(\SSgraph{g-1}{\ell},\ldots,\SSgraph{1}{\ell}\)
(each time walking from \(\SSgraph{i}{\ell}\)
into \(\SSgraph{i-1}{\ell}\times\SSgraph{1}{\ell}\))
before gluing the results together
to obtain a path in \(\SSgraph{g}{\ell}\).

\begin{lemma}
    \label{lemma:combine}
%    Let \(\alpha: A \to A'\)
%    be a walk of length \(a\) in \(\SSgraph{i}{\ell}\)
%    and \(\beta: B \to B'\)
%    be a walk of length \(b\) in \(\SSgraph{j}{\ell}\)
%    for some \(a\) and \(b\).
%    Without loss of generality,
%    suppose \(a \ge b\).
    Let \(\alpha: A \to A'\)
    and \(\beta: B \to B'\)
    be walks 
    in \(\SSgraph{i}{\ell}\)
    and \(\SSgraph{j}{\ell}\)
    of lengths \(a\) and \(b\),
    respectively.
    If \(a \equiv b \pmod{2}\),
    then 
    we can efficiently compute
    a path of length \(\max(a,b)\) from
    \(A\times B\) to \(A'\times B'\)
    in \(\SSgraph{i+j}{\ell}\).
\end{lemma}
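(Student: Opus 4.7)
The plan is to construct the desired walk step by step, taking the product of $\alpha$ and $\beta$ as isogenies at each position. The first thing to establish is that products of $(\ell,\ldots,\ell)$-isogenies are themselves $(\ell,\ldots,\ell)$-isogenies of the product PPAVs: if $L_A \subset A[\ell]$ and $L_B \subset B[\ell]$ are Lagrangian for their respective Weil pairings, then $L_A \times L_B \subset (A \times B)[\ell]$ is Lagrangian for the product Weil pairing, since isotropy passes termwise and the $\FF_\ell$-dimensions add to the required maximum $i+j$. Equipped with the product principal polarization, therefore, the product isogeny $\phi_A \times \phi_B$ gives a single edge in $\SSgraph{i+j}{\ell}$, and superspeciality is preserved since a product of superspecial PPAVs is superspecial.

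When $a = b$, the result is immediate: taking the step-by-step product of the isogenies comprising $\alpha$ and $\beta$ gives a walk of length $a$ from $A \times B$ to $A' \times B'$ in $\SSgraph{i+j}{\ell}$. In the unequal case, assume without loss of generality that $a > b$. I would pad the shorter walk $\beta$ to length $a$ before taking the product: pick any $(\ell,\ldots,\ell)$-isogeny $\pi: B' \to C$ in $\SSgraph{j}{\ell}$ together with its Rosati dual $\dualof{\pi}: C \to B'$, and append the pair $(\pi, \dualof{\pi})$ to $\beta$, which extends it by two steps while returning to $B'$. Iterating this $(a-b)/2$ times---precisely where the parity hypothesis $a \equiv b \pmod{2}$ is needed---produces a walk $\beta'$ of length $a$ from $B$ to $B'$, and the step-by-step product of $\alpha$ with $\beta'$ gives the required walk of length $\max(a,b) = a$.

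The main technical point to verify is that superspeciality and the Lagrangian condition indeed pass through products, as outlined above; both are standard but deserve to be spelled out carefully. A smaller subtlety concerns the word ``path'' in the statement: the padded walk $\beta'$ revisits $B'$ and $C$, but at each padding step the simultaneously performed $\alpha$-edge is not the dual of its predecessor, so the product walk remains non-backtracking, and the product vertices remain distinct as soon as the original $\alpha$ is itself a path, since product vertices are then already distinguished by their $A$-coordinate. Efficiency is clear from the construction: each of the $\max(a,b)$ product isogenies is assembled from two existing isogenies in time polynomial in $\log p$, so the whole procedure runs in $\softO(\max(a,b))$ operations. \qed
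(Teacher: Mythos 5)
Your proof is correct and takes essentially the same approach as the paper's: pad the shorter walk with $(a-b)/2$ back-and-forth pairs so the lengths match, then take the termwise product. The only cosmetic difference is that the paper reuses the final step $\beta_b$ and its Rosati dual for the padding, while you introduce a fresh isogeny $\pi$; both work, and your extra remarks on Lagrangian products, superspeciality, and non-backtracking simply make explicit what the paper leaves implicit.
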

\begin{proof}
    Write 
    \(\alpha = \alpha_1\circ\cdots\circ\alpha_a\)
    and
    \(\beta = \beta_1\circ\cdots\circ\beta_b\)
    as compositions of \((\ell,\cdots,\ell)\)-isogenies.
    %% RESTORE "Without..." in place of WLOG
    %Without loss of generality,
    WLOG,
    suppose \(a \ge b\).
    Set \(\beta_{b+1} = \dualof{\beta_b}\),
    \(\beta_{b+2} = \beta_b\),
    ...,
    \(\beta_{a-1} = \dualof{\beta_b}\), 
    \(\beta_{a} = \beta_b\);
    then
    \(\alpha\times\beta:
    (\alpha_1\times\beta_1)\circ\cdots\circ(\alpha_a\times\beta_a)\)
    is a path from \(A\times B\) to \(A'\times B'\).
    \qed
\end{proof}

Equations~\ref{eq:distribution-g} and~\ref{eq:size-relation}
show that a walk of length \(O(\log p)\)
lands in the image of \(\SSset{g-1}\times\SSset{1}\)
with probability \(O(1/p^{g-1})\),
and after \(O(p^{g-1})\) such short walks
we are in \(\SSset{g-1}\times\SSset{1}\)
with probability bounded away from zero.
More generally,
we can walk into the image of \(\SSset{g-i}\times\SSset{i}\)
for any \(0 < i < g\);
but
the probability of this is \(O(1/p^{i(g-i)})\),
which is maximised by \(i = 1\) and \(g-1\).

\begin{algorithm}
    \caption{Computing isogeny paths in \(\SSgraph{g}{\ell}\)}
    \label{alg:general}
    \KwIn{\(\AV\) and \(\AV'\) in \(\SSset{g}\)}
    \KwOut{A path \(\phi: \AV \to \AV'\) in \(\SSgraph{g}{\ell}\)}
    %\BlankLine
    %
    Find a path \(\psi\)
    from \(\AV\) to some point \(\BV\times \EC\)
    in \(\SSset{g-1}\times \SSset{1}\)     \label{alg:general:step-1}
    \;
    Find a path \(\psi'\)
    from \(\AV'\) to some point \(\BV'\times \EC'\)
    in \(\SSset{g-1}\times \SSset{1}\)    \label{alg:general:step-2}
    \;
    Find a path \(\beta: \BV \to \BV'\) in \(\SSgraph{g-1}{\ell}\)    \label{alg:general:step-3}
    using Algorithm~\ref{alg:general} recursively if \(g-1 > 1\),
    or elliptic path-finding if \(g-1 = 1\)
    \;
    Find a path \(\eta: \EC \to \EC'\) in \(\SSgraph{1}{\ell}\)       \label{alg:general:step-4}
    using elliptic path-finding
    \;
    Let \(b = \text{length}(\beta)\)
    and \(e = \text{length}(\eta)\).
    If \(b \not\equiv e \pmod{2}\),
    then fail and return \(\bot\)
    (or try again with another \(\psi\)
    and/or \(\psi'\), \(\beta\), or \(\eta\))                        \label{alg:general:step-5}
    \;
    Construct the product path \(\pi: \BV\times \EC\to \BV'\times \EC'\)
    defined by Lemma~\ref{lemma:combine}.
    \;
    \Return the path                                               \label{alg:general:step-6}
        \(
            \phi := \dualof{\psi'} \circ \pi \circ \psi
        \)
        from \(\AV\) to \(\AV'\).
\end{algorithm}

\subsubsection{Proof of Theorem~\ref{thm:classic}}
Algorithm~\ref{alg:general}
implements the approach above, and proves Theorem~\ref{thm:classic}.
Step~\ref{alg:general:step-1} computes \(\psi\)
by taking \(O(p^{g-1})\)
non-backtracking random walks of length \(O(\log(p))\)
which can be trivially parallelized,
so with \(P\) processors we expect \(\softO(p^{g-1}/P)\) steps
before finding \(\psi\).
(If \(\AV\) is a fixed public base point %in \(\SSset{g}\)
then we can assume \(\psi\) is already known).
Likewise, Step~\ref{alg:general:step-2} takes
\(\softO(p^{g-1}/P)\) steps to compute \(\psi'\).
After \(g-1\) recursive calls,
we have reduced to the problem of computing paths in
\(\SSgraph{1}{\ell}\)
in Step~\ref{alg:general:step-4},
which can be done in time \(O(\sqrt{p}/P)\).
Step~\ref{alg:general:step-6}
applies Lemma~\ref{lemma:combine}
to compute the final path in polynomial time.
At each level of the recursion,
we have a \(1/2\) chance of having the same walk-length parity;
hence, Algorithm~\ref{alg:general} succeeds with probability \(1/2^{g-1}\).
This could be improved by computing more walks when
the parities do not match, but \(1/2^{g-1}\) suffices to prove the
theorem.
The total runtime is
\(
    %\softO\big(
        %\frac{1}{P}
        %\big(\sum_{i=1}^{g} p^{i-1} + \sqrt{p} \big)
    %\big)
    %= 
    \softO(p^{g-1}/P)
\)
isogeny steps.

\subsubsection{Proof of Theorem~\ref{thm:quantum}}

Algorithm~\ref{alg:general} can be run in a quantum computation
model as follows.
First,
recall from the proof of Theorem~\ref{thm:classic} that
Steps~\ref{alg:general:step-1} and~\ref{alg:general:step-2} 
find product varieties by taking $O(p^{g-1})$ walks of length $O(\log(p))$.
Here we proceed following Biasse, Jao and Sankar~\cite[\S 4]{BJS14}.
Let $N$ be the number of walks in $O(p^{g-1})$ of length $\lambda$ (in $O(\log(p))$). 
To compute $\psi$, we define an injection
\[
    f \colon [1, \dots , N] \longrightarrow 
    \{ \text{nodes of distance $\lambda$ starting from $\AV$} \}
    \,,
\]
and a function $C_f\colon [1, \dots , N] \to \{0,1\}$ by
\( C_f(x) = 1\) if \(f(x)\) is in \(\SSset{g-1} \times \SSset{1}\),
and \(0\) otherwise.
If there is precisely one $x$ with $C_f(x)=1$,
Grover's algorithm~\cite{grover} will find it (with probability \(\ge 1/2\))
in $O(\sqrt{N})$ iterations.
If there are an unknown $t \ge 1$ such solutions,
then Boyer--Brassard--H{\o}yer--Tapp~\cite{boyer1998tight} finds one in
$O(\sqrt{N/t})$ iterations. Hence, if we take $\lambda$ large enough to
expect at least one solution, then we will find it in
$O(\sqrt{p^{g-1}})$ Grover iterations. We compute $\psi'$ (and any
recursive invocations of Steps~\ref{alg:general:step-1}
and~\ref{alg:general:step-2}) similarly.

For the elliptic path finding in Steps~\ref{alg:general:step-3}
and~\ref{alg:general:step-4}, we can apply (classical) Pollard-style
pseudorandom walks which require $\softO(\sqrt{p})$ memory and
$\softO(\sqrt{p})$ operations to find an $\ell$-isogeny path.
Alternatively, we can reduce storage costs by applying Grover's
algorithm to the full graph \(\SSgraph{1}{\ell}\) to find an $\ell$-isogeny path in expected time $O(\sqrt{p})$. 
Finally, Step~\ref{alg:general:step-6} applies Lemma~\ref{lemma:combine} to compute the final path.

\begin{remark}
    \label{rem:endomorphism-algorithm}
    We can use the same approach as Algorithm~\ref{alg:general}
    to compute explicit endomorphism rings of superspecial PPAVs.
    Suppose we want to compute \(\End(\AV)\) for some \(g\)-dimensional
    \(\AV\) in \(\SSset{g}\).
    Following the first steps of Algorithm~\ref{alg:general},
    we compute a walk \(\phi\) from \(\AV\) into \(\SSset{g-1}\times\SSset{1}\),
    classically or quantumly,
    recursing until we end up at some \(\EC_1\times\cdots\times\EC_g\) in \(\SSset{1}^g\).
    Now we apply an elliptic endomorphism-ring-computing algorithm
    to each of the \(\EC_i\);
    this is equivalent to solving the isogeny problem
    in~\(\SSgraph{1}{\ell}\)
    (see~\cite[\S5]{petit}),
    so its cost is in \(\softO(\sqrt{p})\).
    The products of the generators for the \(\End(\EC_i)\) 
    form generators for
    \(\End(\EC_1\times\cdots\times\EC_g)\),
    which we can then pull back through \(\phi\) 
    to compute a finite-index subring of \(\End(\AV)\)
    that is maximal away from \(\ell\).
    The total cost is a classical \(\softO(p^{g-1}/P)\)
    (on \(P\) processors),
    or a quantum \(\softO(\sqrt{p^{g-1}})\),
    plus the cost of the pullback.
\end{remark}

\begin{remark}\label{rem:anyversusell}
    Algorithm~\ref{alg:general}
    computes compositions of $(\ell, \dots, \ell)$-isogenies.
    If we relax and allow arbitrary-degree isogenies,
    not just paths in \(\SSgraph{g}{\ell}\) for fixed $\ell$, 
    then the elliptic path-finding steps can use
    the classical Delfs--Galbraith~\cite{Delfs--Galbraith} 
    or quantum Biasse--Jao--Sankar~\cite{BJS14} algorithms.
    While this would not change the asymptotic runtime of Algorithm~\ref{alg:general}
    (under the reasonable assumption 
    that the appropriate analogue of vertices ``defined over \(\FF_p\)'' with
    commutative endomorphism rings form 
    a subset of size \(O(\sqrt{\#\SSset{g}})\)),
    both of these algorithms have low memory requirements and are arguably more implementation-friendly than Pollard-style pseudorandom walks~\cite[\S 4]{Delfs--Galbraith}.
\end{remark}

\section{%%%%%%%%%%%%%%%%%%%%%%%%%%%%%%%%%%%%%%%%%%%%%%%%%%%%%%%%%%%%%%%%%%%%%%%
    Cryptographic implications
}%%%%%%%%%%%%%%%%%%%%%%%%%%%%%%%%%%%%%%%%%%%%%%%%%%%%%%%%%%%%%%%%%%%%%%%%%%%%%%%

Table~\ref{tab:complexities} compares 
Algorithm~\ref{alg:general} with the best known attacks 
for dimensions $g\le 6$.
For general path-finding,
the best known algorithms are
classical Pollard-style pseudorandom walks
and quantum Grover search~\cite{grover,boyer1998tight}.
As noted in Remark~\ref{rem:anyversusell},
higher-dimensional analogues of
Delfs--Galbraith~\cite{Delfs--Galbraith}
or Biasse--Jao--Sankar~\cite{BJS14}
might yield practical improvements,
without changing the asymptotic runtime.

\begin{table}
    \caption{Logarithms (base \(p\))
    of asymptotic complexities of %classical and quantum 
    algorithms for solving the isogeny problems in \(\SSgraph{g}{\ell}\)
    for \(1 \le g \le 6\). Further explanation in text.}
    \label{tab:complexities}
    \renewcommand{\tabcolsep}{0.20cm}
    \renewcommand{\arraystretch}{1.3}
    \centering
    \begin{tabular}{crcccccc}
        & Dimension \(g\) & 1 & 2 & 3 & 4 & 5 & 6 \\
        \hline
        \multirow{2}{*}{Classical} &
        {\bf Algorithm~\ref{alg:general}} &
        --- & \(\mathbf{1}\) & \(\mathbf{2}\) & \(\mathbf{3}\) & \(\mathbf{4}\) & \(\mathbf{5}\)
        \\
        &
        Pollard/Delfs--Galbraith~\cite{Delfs--Galbraith} &
        \(0.5\) & 
        \(1.5\) & 
        \(3\) & 
        \(5\) & 
        \(7.5\) & 
        \(10.5\) 
        \\
        \hline
        \multirow{2}{*}{Quantum} &
        {\bf Algorithm~\ref{alg:general}} &
        --- & \(\mathbf{0.5}\) & \(\mathbf{1}\) & \(\mathbf{1.5}\) & \(\mathbf{2}\) & \(\mathbf{2.5}\)
        \\
        &
        Grover/Biasse--Jao--Sankar~\cite{BJS14} &
        \(0.25\) &
        \(0.75\) &
        \(1.5\) &
        \(2.5\) &
        \(3.75\) &
        \(4.25\)
        \\
        \bottomrule 
    \end{tabular}
\end{table}

The paths in \(\SSgraph{g}{\ell}\)
constructed by Algorithm~\ref{alg:general} 
are generally too long to be private keys for SIDH analogues,
which are paths of a fixed and typically shorter length.
Extrapolating from 
$g=1$~\cite{SIDH} and $g=2$~\cite{flynnti},
we suppose that the secret keyspace has size $O(\sqrt{\#\SSset{g}}) = O(p^{g(g+1)/4})$
and the target isogeny has degree in $O(\sqrt{p})$,
corresponding to a path of length roughly \(\log_\ell(p)/2\) in \(\SSgraph{g}{\ell}\).
On the surface, therefore,
Algorithm~\ref{alg:general} does not yield a direct attack on
SIDH-style protocols;
or, at least, not a direct attack that succeeds with high probability.
(Indeed, to resist direct attacks from Algorithm~\ref{alg:general},
it would suffice to abort any key generations passing through
vertices in \(\SSset{g-1}\times\SSset{1}\).)

However, we can anticipate an attack via endomorphism rings, 
generalizing the attack described at the end of~\S\ref{sec:crypto-g-1},
using the algorithm outlined in Remark~\ref{rem:endomorphism-algorithm}.
If we assume that what is polynomial-time for elliptic endomorphisms
remains so for (fixed) \(g > 1\),
then we can break \(g\)-dimensional SIDH keys
by computing shortest paths in~\(\SSgraph{g}{\ell}\)
with the same complexity as Algorithm~\ref{alg:general}:
that is, classical \(\softO(p^{g-1}/P)\)
and quantum \(\softO(p^{(g-1)/2})\) 
for \(g > 1\).

This conjectural cost compares very favourably
against the best known classical and quantum attacks
on \(g\)-dimensional SIDH.
In the classical paradigm,
a meet-in-the-middle attack
would run in \(\softO(p^{g(g+1)/8})\),
with similar storage requirements.
In practice the best attack 
is the golden-collision van Oorschot--Wiener (vOW) algorithm~\cite{vOW}
investigated in~\cite{Adjetal},
which given storage \(w\)
runs in expected time \(\softO(p^{3g(g+1)/16}/(P\sqrt{w}))\). For fixed $w$, the attack envisioned above gives an asymptotic improvement over vOW for all $g>1$. If an adversary has access to a large amount of storage, then vOW may still be the best classical algorithm for $g \leq 5$, particularly when smaller primes are used to target lower security levels. (vOW becomes strictly worse for all $g>5$, even if we assume unbounded storage.) In the quantum paradigm,
%Grover search requires \(\softO(p^{g(g+1)/8})\) steps;
Tani's algorithm~\cite{tani}
would succeed in \(\softO(p^{g(g+1)/12})\), meaning we get the same asymptotic complexities for dimensions 2 and 3, and an asymptotic improvement for all $g>3$. Moreover, Jaques and Schanck~\cite{JS19} suggest a significant gap between the asymptotic runtime of Tani's algorithm and its actual efficacy in any meaningful model of quantum computation. On the other hand, the bottleneck of the quantum attack forecasted above is a relatively straightforward invocation of Grover search, and the gap between its asymptotic and concrete complexities is likely to be much closer. 

Like the size of $S_g(p)$, the exponents in the runtime complexities of
all of the algorithms above are quadratic in $g$. 
Indeed, this was the practical motivation for instantiating isogeny-based
cryptosystems in \(g > 1\).
In contrast, the exponents for
Algorithm~\ref{alg:general} and our proposed SIDH attack
are linear in $g$. 
This makes the potential trade-offs for cryptosystems based on
higher-dimensional supersingular isogeny problems
appear significantly less favourable, particularly as \(g\) grows 
and the gap between the previous best attacks and Algorithm~\ref{alg:general} widens.

%%%%%%%%%%%%%%%%%%%%%%%%%%%%%%%%%%%%%%%%%%%%%%%%%%%%%%%%%%%%%%%%%%%%%%%%%%%%%%%%
%\bibliographystyle{splncs04}
\bibliographystyle{abbrv}
\bibliography{bib}

\begin{thebibliography}{10}

\bibitem{Adjetal}
G.~Adj, D.~Cervantes{-}V{\'{a}}zquez, J.~Chi{-}Dom{\'{\i}}nguez, A.~Menezes,
  and F.~Rodr{\'{\i}}guez{-}Henr{\'{\i}}quez.
\newblock On the cost of computing isogenies between supersingular elliptic
  curves.
\newblock In {\em 25th Conference on Selected Areas in Cryptography (SAC 2018),
  to appear}, 2018, preprint: \url{https://eprint.iacr.org/2018/313}.

\bibitem{SIKE}
R.~Azarderakhsh, B.~Koziel, M.~Campagna, B.~LaMacchia, C.~Costello, P.~Longa,
  L.~De~Feo, M.~Naehrig, B.~Hess, J.~Renes, A.~Jalali, V.~Soukharev, D.~Jao,
  and D.~Urbanik.
\newblock Supersingular isogeny key encapsulation, 2017.

\bibitem{BJS14}
J.~Biasse, D.~Jao, and A.~Sankar.
\newblock A quantum algorithm for computing isogenies between supersingular
  elliptic curves.
\newblock In W.~Meier and D.~Mukhopadhyay, editors, {\em Progress in Cryptology
  - {INDOCRYPT} 2014 - 15th International Conference on Cryptology in India,
  New Delhi, India, December 14-17, 2014, Proceedings}, volume 8885 of {\em
  Lecture Notes in Computer Science}, pages 428--442. Springer, 2014.

\bibitem{AVIsogenies}
G.~Bisson, R.~Cosset, and D.~Robert.
\newblock {AVIsogenies} -- a library for computing isogenies between abelian
  varieties, November 2012.
\newblock URL: \url{http://avisogenies.gforge.inria.fr}.

\bibitem{boyer1998tight}
M.~Boyer, G.~Brassard, P.~H{\o}yer, and A.~Tapp.
\newblock Tight bounds on quantum searching.
\newblock {\em Fortschritte der Physik: Progress of Physics}, 46(4-5):493--505,
  1998.

\bibitem{Cassels--Flynn}
J.~W.~S. Cassels and E.~V. Flynn.
\newblock {\em Prolegomena to a middlebrow arithmetic of curves of genus 2},
  volume 230 of {\em {London Mathematical Society Lecture Note Series}}.
\newblock Cambridge University Press, 1996.

\bibitem{Castryck--Decru--Smith}
W.~Castryck, T.~Decru, and B.~Smith.
\newblock Hash functions from superspecial genus-2 curves using {Richelot}
  isogenies.
\newblock Cryptology ePrint Archive, Report 2019/296, 2019.
\newblock To appear in the proceedings of {NuTMiC 2019}.

\bibitem{csidh}
W.~Castryck, T.~Lange, C.~Martindale, L.~Panny, and J.~Renes.
\newblock {CSIDH}: An efficient post-quantum commutative group action.
\newblock In T.~Peyrin and S.~Galbraith, editors, {\em Advances in Cryptology
  -- ASIACRYPT 2018, Part III}, pages 395--427. Springer International
  Publishing, 2018.

\bibitem{CGL-2}
D.~X. Charles, E.~Z. Goren, and K.~E. Lauter.
\newblock Families of {Ramanujan} graphs and quaternion algebras.
\newblock In J.~Harnad and P.~Winternitz, editors, {\em Groups and symmetries,
  from neolithic scots to {John McKay}}, pages 53--80. AMS, 2009.

\bibitem{CGL}
D.~X. Charles, K.~E. Lauter, and E.~Z. Goren.
\newblock Cryptographic hash functions from expander graphs.
\newblock {\em Journal of Cryptology}, 22(1):93--113, 2009.

\bibitem{costellokummer}
C.~Costello.
\newblock Computing supersingular isogenies on {K}ummer surfaces.
\newblock In {\em International Conference on the Theory and Application of
  Cryptology and Information Security}, pages 428--456. Springer, 2018.

\bibitem{seasign}
L.~De~Feo and S.~Galbraith.
\newblock {SeaSign}: Compact isogeny signatures from class group actions.
\newblock In {\em Advances in Cryptology -- EUROCRYPT 2019}, 2019.
\newblock To appear.

\bibitem{vdf}
L.~De~Feo, S.~Masson, C.~Petit, and A.~Sanso.
\newblock Verifiable delay functions from supersingular isogenies and pairings.
\newblock Cryptology ePrint Archive, Report 2019/166, 2019.

\bibitem{Delfs--Galbraith}
C.~Delfs and S.~D. Galbraith.
\newblock Computing isogenies between supersingular elliptic curves over
  {$\mathbb{F}_p$}.
\newblock {\em Des. Codes Cryptography}, 78(2):425--440, 2016.

\bibitem{Diem06}
C.~Diem.
\newblock An index calculus algorithm for plane curves of small degree.
\newblock In F.~Hess, S.~Pauli, and M.~E. Pohst, editors, {\em Algorithmic
  Number Theory, 7th International Symposium, ANTS-VII, Berlin, Germany, July
  23-28, 2006, Proceedings}, volume 4076 of {\em Lecture Notes in Computer
  Science}, pages 543--557. Springer, 2006.

\bibitem{Dudeanu--Jetchev--Robert--Vuille}
A.~Dudeanu, D.~Jetchev, D.~Robert, and M.~Vuille.
\newblock {Cyclic Isogenies for Abelian Varieties with Real Multiplication},
  Nov. 2017.
\newblock preprint.

\bibitem{petit}
K.~Eisentr\"ager, S.~Hallgren, K.~Lauter, T.~Morrison, and C.~Petit.
\newblock Supersingular isogeny graphs and endomorphism rings: reductions and
  solutions.
\newblock In J.~B. Nielsen and V.~Rijmen, editors, {\em Advances in
  cryptology---{EUROCRYPT} 2018. {P}art {III}}, pages 329--368. Springer
  International Publishing, 2018.

\bibitem{Ekedahl}
T.~Ekedahl.
\newblock On supersingular curves and abelian varieties.
\newblock {\em Mathematica Scandinavica}, 60:151--178, 1987.

\bibitem{flynnti}
E.~V. Flynn and Y.~B. Ti.
\newblock Genus two isogeny cryptography.
\newblock In J.~Ding and R.~Steinwandt, editors, {\em Post-Quantum Cryptography
  - 10th International Conference, PQCrypto 2019, Chongqing, China, May 8-10,
  2019 Revised Selected Papers}, volume 11505 of {\em Lecture Notes in Computer
  Science}, pages 286--306. Springer, 2019.

\bibitem{GPST}
S.~D. Galbraith, C.~Petit, B.~Shani, and Y.~B. Ti.
\newblock On the security of supersingular isogeny cryptosystems.
\newblock In J.~H. Cheon and T.~Takagi, editors, {\em Advances in Cryptology -
  {ASIACRYPT} 2016 - 22nd International Conference on the Theory and
  Application of Cryptology and Information Security, Hanoi, Vietnam, December
  4-8, 2016, Proceedings, Part {I}}, volume 10031 of {\em Lecture Notes in
  Computer Science}, pages 63--91, 2016.

\bibitem{GPS17}
S.~D. Galbraith, C.~Petit, and J.~Silva.
\newblock Identification protocols and signature schemes based on supersingular
  isogeny problems.
\newblock In T.~Takagi and T.~Peyrin, editors, {\em Advances in Cryptology -
  {ASIACRYPT} 2017 - 23rd International Conference on the Theory and
  Applications of Cryptology and Information Security, Hong Kong, China,
  December 3-7, 2017, Proceedings, Part {I}}, volume 10624 of {\em Lecture
  Notes in Computer Science}, pages 3--33. Springer, 2017.

\bibitem{Gaudry09}
P.~Gaudry.
\newblock Index calculus for abelian varieties of small dimension and the
  elliptic curve discrete logarithm problem.
\newblock {\em J. Symb. Comput.}, 44(12):1690--1702, 2009.

\bibitem{GTTD}
P.~Gaudry, E.~Thom{\'{e}}, N.~Th{\'{e}}riault, and C.~Diem.
\newblock A double large prime variation for small genus hyperelliptic index
  calculus.
\newblock {\em Math. Comput.}, 76(257):475--492, 2007.

\bibitem{grover}
L.~K. Grover.
\newblock A fast quantum mechanical algorithm for database search.
\newblock In G.~L. Miller, editor, {\em Proceedings of the Twenty-Eighth Annual
  {ACM} Symposium on the Theory of Computing, Philadelphia, Pennsylvania, USA,
  May 22-24, 1996}, pages 212--219. {ACM}, 1996.

\bibitem{HLW}
S.~Hoory, N.~Linial, and A.~Wigderson.
\newblock Expander graphs and their applications.
\newblock {\em Bulletin (New Series) of the American Mathematical Society},
  43(4):439--561, 2006.

\bibitem{Nicole}
M.-N. Hubert.
\newblock {\em Superspecial abelian varieties, theta series and the
  {Jacquet--Langlands} correspondence}.
\newblock PhD thesis, {McGill} University, 2005.

\bibitem{SIDH}
D.~Jao and L.~De~Feo.
\newblock Towards quantum-resistant cryptosystems from supersingular elliptic
  curve isogenies.
\newblock In {\em International Workshop on Post-Quantum Cryptography}, pages
  19--34. Springer, 2011.

\bibitem{JS19}
S.~Jaques and J.~M. Schanck.
\newblock Quantum cryptanalysis in the {RAM} model: Claw-finding attacks on
  {SIKE}.
\newblock In A.~Boldyreva and D.~Micciancio, editors, {\em Advances in
  Cryptology - {CRYPTO} 2019 - 39th Annual International Cryptology Conference,
  Santa Barbara, CA, USA, August 18-22, 2019, Proceedings, Part {I}}, volume
  11692 of {\em Lecture Notes in Computer Science}, pages 32--61. Springer,
  2019.

\bibitem{KLPT}
D.~Kohel, K.~Lauter, C.~Petit, and J.-P. Tignol.
\newblock On the quaternion {$\ell$}-isogeny path problem.
\newblock {\em LMS J. Comput. Math.}, 17(suppl. A):418--432, 2014.

\bibitem{Lubicz--Robert}
D.~Lubicz and D.~Robert.
\newblock Arithmetic on abelian and {Kummer} varieties.
\newblock {\em Finite Fields and Their Applications}, 39:130--158, 2016.

\bibitem{Petit17}
C.~Petit.
\newblock Faster algorithms for isogeny problems using torsion point images.
\newblock In T.~Takagi and T.~Peyrin, editors, {\em Advances in Cryptology -
  {ASIACRYPT} 2017 - 23rd International Conference on the Theory and
  Applications of Cryptology and Information Security, Hong Kong, China,
  December 3-7, 2017, Proceedings, Part {II}}, volume 10625 of {\em Lecture
  Notes in Computer Science}, pages 330--353. Springer, 2017.

\bibitem{pizer}
A.~K. Pizer.
\newblock Ramanujan graphs and hecke operators.
\newblock {\em Bull. Am. Math. Soc.}, 23(1), 1990.

\bibitem{Smith}
B.~Smith.
\newblock {\em Explicit endomorphisms and correspondences}.
\newblock PhD thesis, University of Sydney, 2005.

\bibitem{Smith09}
B.~A. Smith.
\newblock Isogenies and the discrete logarithm problem in jacobians of genus 3
  hyperelliptic curves.
\newblock {\em J. Cryptology}, 22(4):505--529, 2009.

\bibitem{Sutherland}
A.~V. Sutherland.
\newblock Identifying supersingular elliptic curves.
\newblock {\em LMS Journal of Computation and Mathematics}, 15:317--325, 2012.

\bibitem{Takashima}
K.~Takashima.
\newblock {\em Mathematical Modelling for Next-Generation Cryptography: {CREST}
  Crypto-Math Project}, chapter Efficient Algorithms for Isogeny Sequences and
  Their Cryptographic Applications, pages 97--114.
\newblock Springer Singapore, Singapore, 2018.

\bibitem{tani}
S.~Tani.
\newblock Claw finding algorithms using quantum walk.
\newblock {\em Theor. Comput. Sci.}, 410(50):5285--5297, 2009.

\bibitem{vOW}
P.~C. van Oorschot and M.~J. Wiener.
\newblock Parallel collision search with cryptanalytic applications.
\newblock {\em J. Cryptology}, 12(1):1--28, 1999.

\end{thebibliography}
%%%%%%%%%%%%%%%%%%%%%%%%%%%%%%%%%%%%%%%%%%%%%%%%%%%%%%%%%%%%%%%%%%%%%%%%%%%%%%%%

\appendix

\section{A proof-of-concept implementation}

We include a naive Magma implementation of the product finding stage (i.e. Steps~\ref{alg:general:step-1}-\ref{alg:general:step-3}) of Algorithm~\ref{alg:general} in dimension $g=2$ with $\ell=2$.
%For a fixed $p$, 
First, it generates a challenge by walking from the known superspecial node corresponding to
the curve $\XC \colon y^2=x^5+x$ over a given $\FF_{p^2}$ to a random
abelian surface in \(\SSgraph{2}{2}\), which becomes the target \(\AV\).
Then it starts computing random walks of length slightly larger than
$\log_2(p)$, whose steps correspond to $(2,2)$-isogenies.
As each step is taken, it checks whether we have landed on a product of
two elliptic curves (at which point it will terminate) before continuing. 

Magma's built-in functionality for $(2,2)$-isogenies makes this rather straightforward. At a given node, the function {\tt RichelotIsogenousSurfaces} computes all 15 of its neighbours, so our random walks are simply a matter of generating enough entropy to choose one of these neighbours at each of the $O(\log(p))$ steps. For the sake of replicability, we have used Magma's inbuilt implementation of SHA-1 to produce pseudo-random walks that are deterministically generated by an input seed. SHA-1 produces 160-bit strings, which correspond to 40 integers in $[0,1,\dots, 15]$; this gives a straightforward way to take 40 pseudo-random steps in \(\SSgraph{2}{2}\), where no step is taken if the integer is 0, and otherwise the index is used to choose one of the 15 neighbours. 

The seed {\tt processor} can be used to generate independent walks across multiple processors. We always used the seed ``0'' to generate the target surface, and set {\tt processor} to be the string ``1'' to kickstart a single process for very small primes. For the second and third largest primes, we used the strings ``1'', ``2'', \dots , ``16'' as seeds to 16 different deterministic processes. For the largest prime, we seeded 128 different processes. 

For the prime $p={\bf 127}=2^7-1$, the seed ``0'' walks us to the starting node corresponding to $C_0/\FF_{p^2} \colon y^2=(41i + 63)x^6 +\dots + (6i +12)x + 70$. The single processor seeded with ``1'' found a product variety $E_1 \times E_2$ on its second walk after taking {\bf 53 steps} in total, with $E_1/\FF_{p^2} \colon y^2 = x^3 + (93i + 43)x^2 + (23i + 93)x + (2i + 31)$ and $E_2/\FF_{p^2} \colon y^2 = x^3 + (98i + 73)x^2 + (30i + 61)x + (41i + 8)$. 

For the prime $p={\bf 8191}=2^{13}-1$, the single processor seeded with ``1''  found a product variety on its 175-th walk after taking {\bf 6554 steps} in total. 

For the prime $p={\bf 524287}=2^{19}-1$, all 16 processors were used. The processor seeded with ``2'' was the first to find a product variety on its 311-th walk after taking 11680 steps in total. Given that all processors walk at roughly the same pace, at this stage we would have walked close to $16 \cdot 11680 = {\bf 186880}$ {\bf steps}.

For the 25-bit prime $p={\bf 17915903}=2^{13}3^7-1$, the processor seeded with ``13'' found a product variety after taking 341 walks and a total of 12698 steps. At this stage the 16 processors would have collectively taken around {\bf 203168 steps}. 

The largest experiment that we have conducted to date is with the prime $p={\bf 2147483647}=2^{31}-1$, where 128 processors walked in parallel. Here the processor seeded with ``95'' found a product variety after taking 10025 walks and a total of 375703 steps. At this stage the processors would have collectively taken around {\bf 48089984 steps}. 

In all of the above cases we see that product varieties are found with around $p$ steps. The Magma script that follows can be used to verify the experiments\footnote{Readers without access to Magma can make use of the free online calculator at \url{http://magma.maths.usyd.edu.au/calc/}, omitting the ``Write'' functions at the end that are used to print to local files.}, or to experiment with other primes. 

\newpage

\begin{tiny}

\begin{Verbatim}

//////////////////////////////////////////////////////////
clear;

processor:="1";

p:=2^13-1;
Fp:=GF(p);
Fp2<i>:=ExtensionField<Fp,x|x^2+1>;
_<x>:=PolynomialRing(Fp2);

//////////////////////////////////////////////////////////

Next_Walk := function(str)
    H := SHA1(str);
	steps := [ StringToInteger(x, 16): x in ElementToSequence(H) | x ne "0"];
    return steps ,H;
end function;

//////////////////////////////////////////////////////////

Walk_To_Starting_Jacobian:=function(str)
	
	steps,H:= Next_Walk(str);
	
	C0:=HyperellipticCurve(x^5+x);
	J0:=Jacobian(C0);
	for i:=1 to #steps do
		neighbours:=RichelotIsogenousSurfaces(J0);
		if Type(neighbours[steps[i]]) ne SetCart then
			J0:=neighbours[steps[i]];
		end if;
	end for;

	return J0;

end function;

//////////////////////////////////////////////////////////

Walk_Until_Found:=function(seed,J0);

	found:=false;
	H:=seed;
	found:=false;
	walks_done:=0;
	steps_done:=0;

	while not found do

		walks_done+:=1;
		walks_done, "walks and",steps_done, "steps on core", processor, "for p=",p;
		J:=J0;
		steps,H:=Next_Walk(H);

		for i:=1 to #steps do
			steps_done+:=1;
			J:=RichelotIsogenousSurfaces(J)[steps[i]];
			if Type(J) eq SetCart then
				found:=true;
				index:=i;
				break;
			end if;
		end for;

	end while;

	return steps,index,walks_done,steps_done,J;

end function;

//////////////////////////////////////////////////////////

file_name:="p" cat IntegerToString(p)  cat "-" cat processor cat ".txt";
J0:=Walk_To_Starting_Jacobian("0");
steps,index,walks_done,steps_done,J:=Walk_Until_Found(processor,J0);

Write(file_name, "walks done =");
Write(file_name, walks_done);
Write(file_name, "steps_done =");
Write(file_name, steps_done);
Write(file_name, "steps=");
Write(file_name, steps);
Write(file_name, "index=");
Write(file_name, index);
Write(file_name, "Elliptic Product=");
Write(file_name, J);

//////////////////////////////////////////////////////////

\end{Verbatim}

\end{tiny}

\end{document}